\DeclareFontShape{T1}{lmr}{b}{sc}{<->ssub*cmr/bx/sc}{}
\DeclareFontShape{T1}{lmr}{bx}{sc}{<->ssub*cmr/bx/sc}{}
\numberwithin{equation}{section}
\newcommandx{\unsure}[2][1=]{\todo[linecolor=red,backgroundcolor=red!25,bordercolor=red,#1]{#2}}
\newcommandx{\change}[2][1=]{\todo[linecolor=blue,backgroundcolor=blue!25,bordercolor=blue,#1]{#2}}
\newcommandx{\info}[2][1=]{\todo[linecolor=OliveGreen,backgroundcolor=OliveGreen!25,bordercolor=OliveGreen,#1]{#2}}
\newcommandx{\improvement}[2][1=]{\todo[linecolor=black,backgroundcolor=black!25,bordercolor=black,#1]{#2}}
\newcommandx{\thiswillnotshow}[2][1=]{\todo[disable,#1]{#2}}
\crefname{proposition}{Proposition}{Propositions}
\crefname{equation}{}{}
\newtheorem{theorem}{Theorem}[section]
\newtheorem{lemma}[theorem]{Lemma}
\newtheorem{proposition}[theorem]{Proposition}
\newtheorem{corollary}[theorem]{Corollary}
\theoremstyle{definition}
\newtheorem{definition}[theorem]{Definition}
\newtheorem{remark}[theorem]{Remark}
\crefname{assumption}{Assumption}{Assumptions}
\crefname{definition}{Definition}{Definitions}
\crefname{corollary}{Corollary}{Corollaries}
\crefname{enumi}{item}{items}
\newsavebox\myboxA
\newsavebox\myboxB
\newlength\mylenA
\newcommand*\xoverline[2][0.75]{%
  \sbox{\myboxA}{$\m@th#2$}%
  \setbox\myboxB\null
  \ht\myboxB=\ht\myboxA%
  \dp\myboxB=\dp\myboxA%
  \wd\myboxB=#1\wd\myboxA
  \sbox\myboxB{$\m@th\overline{\copy\myboxB}$}
  \setlength\mylenA{\the\wd\myboxA}
  \addtolength\mylenA{-\the\wd\myboxB}%
  \ifdim\wd\myboxB<\wd\myboxA%
    \rlap{\hskip 0.5\mylenA\usebox\myboxB}{\usebox\myboxA}%
  \else
    \hskip -0.5\mylenA\rlap{\usebox\myboxA}{\hskip 0.5\mylenA\usebox\myboxB}%
  \fi}
\DeclareMathOperator{\dom}{dom}
\DeclareMathOperator{\SL}{SL}
\newcommand{\seq}[2]{{#1^{(#2)}}}
\newcommand{\hs}{L^2(\R, \dd\mu)}
\newcommand{\tp}{\Tilde{P}}
\DeclareMathOperator{\Z}{\mathbb{Z}}
\DeclareMathOperator{\R}{\mathbb{R}}
\DeclareMathOperator{\C}{\mathbb{C}}
\renewcommand{\i}{\mathbf{i}}
\newcommand{\nm}{\noalign{\smallskip}}
\newcommand{\ds}{\displaystyle}
\renewcommand{\epsilon}{\varepsilon}
\DeclareMathOperator{\dd}{\mathrm{d}}
\renewcommand{\i}{\mathbf{i}}
\DeclareMathOperator{\iL}{{\mathsf{L}}}
\DeclareMathOperator{\iR}{{\mathsf{R}}}
\DeclareMathOperator{\iLR}{{\mathsf{L},\mathsf{R}}}
\newcommand{\ip}[2]{\left\langle #1, #2 \right\rangle}
\newcommand{\mc}[1]{\mathcal{#1}}
\newcommand{\abs}[1]{\left\lvert#1\right\rvert}
\newcommand{\norm}[1]{\left\lVert#1\right\rVert}
\newcommandx{\silvio}[2][1=]{\todo[linecolor=blue,backgroundcolor=blue!25,bordercolor=blue,#1]{Silvio: #2}}
\newcommandx{\bowen}[2][1=]{\todo[linecolor=blue,backgroundcolor=blue!25,bordercolor=blue,#1]{bowen: #2}}
\newcommandx{\alex}[2][1=]{\todo[linecolor=red,backgroundcolor=red!25,bordercolor=red,#1]{Alex: #2}}
\newcommandx{\jiayu}[2][1=]{\todo[linecolor=red,backgroundcolor=red!25,bordercolor=red,#1]{Jiayu: #2}}
\title[Non-Hermitian Fabry--Pérot Resonances]{Non-Hermitian Fabry--Pérot Resonances}
\begin{document}
 \author[H. Ammari]{Habib Ammari \,\orcidlink{0000-0001-7278-4877}}
 \address{\parbox{\linewidth}{Habib Ammari\\
  ETH Z\"urich, Department of Mathematics, Rämistrasse 101, 8092 Z\"urich, Switzerland, \href{http://orcid.org/0000-0001-7278-4877}{orcid.org/0000-0001-7278-4877}}.}
  \email{habib.ammari@math.ethz.ch}
  \thanks{}

 \author[E.O. Hiltunen]{Erik Orvehed Hiltunen\,\orcidlink{0000-0003-2891-9396}}
\address{\parbox{\linewidth}{Erik Orvehed Hiltunen\\
Department of Mathematics, University of Oslo, Moltke Moes vei 35, 0851 Oslo, Norway, 
 \href{http://orcid.org/0000-0003-2891-9396}{orcid.org/0000-0003-2891-9396}}}
\email{erikhilt@math.uio.no}

\author[B. Li]{Bowen Li} 
\address{\parbox{\linewidth}{Bowen Li\\ Department of Mathematics,  City University of Hong Kong, Kowloon Tong, Hong Kong SAR}}
\email{bowen.li@cityu.edu.hk}

\author[P. Liu]{Ping Liu}
\address{\parbox{\linewidth}{Ping Liu\\
School of Mathematical Sciences and Institute of
Fundamental and Transdisciplinary Research, Zhejiang University, Hangzhou, 310027, China.}}
\email{pingliu@zju.edu.cn}

\author[J. Qiu]{Jiayu Qiu} 
\address{\parbox{\linewidth}{Jiayu Qiu\\
Department of Mathematics, ETH Z\"{u}rich, R\"{a}mistrasse 101, CH-8092 Z\"{u}rich, Switzerland}}
\email{jiayu.qiu@sam.math.ethz.ch}

 \author[Y. Shao]{Yingjie Shao}
 \address{\parbox{\linewidth}{Yingjie Shao\\
School of Mathematical Sciences, Zhejiang University, Hangzhou, 310027, China.}}
\email{shaoyingjie323@zju.edu.cn}

 \author[A. Uhlmann]{Alexander Uhlmann\,\orcidlink{0009-0002-0426-6407}}
   \address{\parbox{\linewidth}{Alexander Uhlmann\\
   ETH Z\"urich, Department of Mathematics, Rämistrasse 101, 8092 Z\"urich, Switzerland, \href{http://orcid.org/0009-0002-0426-6407}{orcid.org/0009-0002-0426-6407}}.}
  \email{alexander.uhlmann@sam.math.ethz.ch}

 \begin{abstract}
 We characterise non-Hermitian Fabry--Pérot resonances in high-contrast resonator systems and study the properties of their associated resonant modes from continuous differential models. We consider two non-Hermitian effects: the exceptional point degeneracy and the skin effect induced by imaginary gauge potentials. Using a propagation matrix formalism, we characterise these two non-Hermitian effects beyond the subwavelength regime. This analysis allows us to (i) establish the existence of exceptional points purely from radiation conditions and to (ii) prove that the non-Hermitian skin effect applies uniformly across resonant modes, yielding broadband edge localisation.
 \end{abstract}

\maketitle

\noindent \textbf{Keywords.} Fabry--Pérot resonance, high-contrast resonator system, propagation matrix, non-Hermitian skin effect, exceptional point degeneracy, parity-time symmetric system, imaginary gauge potential, capacitance matrix approximation. \par

\bigskip

\noindent \textbf{AMS Subject classifications.}
35B34,47B28, 35P25, 35C20, 81Q12, 15A18, 15B05. \\

\section{Introduction and motivation}
In this paper, we consider continuous models for systems of non-Hermitian high-contrast resonators and characterise their Fabry--P\'erot resonances. In the subwavelength regime, these scattering resonances (also called resonant frequencies) and effects obtained coincide with those derived earlier through the capacitance matrix formalism \cite{ammari.davies.ea2024Functional}. Two non-Hermitian effects are studied beyond the subwavelength regime: the approximate and exact exceptional points in systems of resonators and the non-Hermitian skin effect induced by imaginary gauge potentials. 

Exceptional points typically occur when a system's resonant frequencies and associated eigenmodes simultaneously coalesce as the system parameters are varied. In terms of finite, discrete systems described by a matrix, they correspond to parameter configurations where the system ceases to be diagonalisable. They are a unique feature of non-Hermitian systems and offer promising applications to enhance sensing due to the higher-order dependence on perturbations of systems at exceptional points \cite{exceptpoint2,hodaei.hassan.ea2017Enhanced, vollmer.arnold.ea2008Single}.

The non-Hermitian skin effect is the phenomenon whereby -- as a consequence of non-reciprocity -- a large proportion 
of the bulk eigenmodes of a non-Hermitian system are localised at one edge of the system. It allows for unidirectional wave control, offering wide-ranging applications in topological photonics, phononics and other condensed matter systems \cite{hatano.nelson1996Localization,yokomizo,rivero.feng.ea2022Imaginary}.

In the first part of the paper, we consider resonator systems with potentially complex wave speeds inside the resonators (in the non-subwavelength regime) and characterise their exceptional points in terms of a mismatch between the order of the zero $\omega_0$ of the characteristic determinant obtained from the total propagation matrix and the dimension of the space of solutions of the scattering resonance problem at $\omega_0$. By doing so, we recover the results obtained in the subwavelength regime that are based on the non-diagonalisability of the capacitance matrix associated with the resonator system \cite{exceptpoint1,exceptpoint2}, which requires the use of complex wave speeds. Both definitions coincide in the subwavelength regime. They also coincide with the fact that a resonant frequency for which exceptional points occur is a pole of the Green function associated to the system that is of order higher than one. This can be seen in the non-subwavelength regime from the fact that the Green function, as the fundamental solution of \eqref{eq: gen Sturm-Liouville}, can be expressed in terms of propagation matrices \cite{SANCHEZSOTO2012191}. In the subwavelength regime, the (discrete) Green function at a frequency $\omega$ is nothing more than the inverse of $\omega^2 I - \Tilde{\mathcal{C}}(0)$ with $\Tilde{\mathcal{C}}(0)$ being the capacitance matrix of the resonator system \cite{ammari.davies.ea2024Functional}. 
Most of the exceptional points discussed in this part are approximate in the sense that they are achieved by ensuring the coalescence of the leading-order behaviour of resonances as the contrast material parameter goes to zero. Nevertheless, by changing the radiation conditions into parity-time symmetric ones, we show analytically that such systems with balanced energy gain and loss have exact exceptional points and demonstrate numerically the square-root coalescence to parameters around both the approximate and exact exceptional points.

The second part of this paper is devoted to the study of the non-Hermitian skin effect in the non-subwavelength regime. Using again propagation matrices for the system and in this non-reciprocal case their symmetrisation, we characterise the scattering resonances of the Helmholtz equation with imaginary gauge potential inside the resonators and show that the skin effect holds even for non-subwavelength modes. The eigenmodes of the resonator system still exhibit a uniform decay that is  independent of frequency. Nevertheless, compared to the subwavelength regime, they are not constant inside the resonators; instead, they may highly oscillate. Note that in \cite{ammari.barandun.ea2023NonHermitian} the occurrence of the non-Hermitian skin effect in the presence of an imaginary gauge potential is proved using Toeplitz matrix theory for a discrete approximation of the continuous model (which is based on differential equations) in terms of the gauge capacitance matrix that holds only in the subwavelength regime. In this paper we prove the occurrence of the non-Hermitian skin effect for the continuous problem both in the subwavelength and non-subwavelength regimes. 

In fact, our results in this paper constitute the first results on non-Hermitian scattering resonances for systems of resonators beyond the subwavelength regime. They give an analytical and numerical framework for computing the scattering resonances in two non-Hermitian settings. In particular, they show that non-Hermitian phenomena  (exceptional degeneracies and accumulation of eigenmodes at one edge of the system) that are known to occur in the subwavelength regime hold also in the non-subwavelength regime. The use of the new discrete approximation in terms of the generalised capacitance matrix in the non-subwavelength regime that is introduced in \cite{pm2} allows for a precise 
description of the resonances in the non-Hermitian settings considered here, how many there are, and the asymptotic properties of their corresponding resonant states. It is expected that such a characterisation holds in three dimensions,  which would also lead to the study of non-Hermitian phenomena in three-dimensional systems beyond the subwavelength regime directly from the continuous model. 

Moreover, while this paper considered simple finite or infinite periodic geometries, another open question is the characterisation of bandgaps for spatially disordered resonator arrays beyond the subwavelength regime. We expect that adapting \cite{ammari.thalhammer.ea2025Uniform} to the propagation matrix framework used in this work should allow for the guaranteed existence of bandgaps from local properties.

Our paper is organised as follows. In \Cref{psetting}, we introduce the scattering resonance problem and its capacitance matrix formulation that hold beyond the subwavelength regime. Then, in \Cref{sec:ep}, we elucidate the variety of exceptional points (exact and asymptotic) that arise in the non-Hermitian setting. Namely, in \cref{ssec:complex_ep} we characterise exceptional points induced by leading-order Hermiticity in the form of complex material parameters. In \cref{ssec:rad_ep}, we demonstrate how exceptional points may be obtained from the non-Hermiticity due to outgoing radiation. Moreover, in \cref{ssec:prc_ep}, we demonstrate that a $\mc{PT}$-symmetric modified radiation condition allows for explicit exceptional points.
\Cref{sec:NR} is devoted to the study of the scattering resonances in non-reciprocal systems and to the proof of the non-Hermitian skin effect in such systems by means of a symmetrisation approach. By combining our approaches in this paper with those in \cite{junshan1, ammari.barandun.ea2025Generalized,barandun2023}, we expect to generalise our results in this paper to periodic infinite (for not only the exceptional point degeneracy but also the Dirac degeneracy) and semi-infinite (for the non-Hermitian skin effect) structures.

\section{Setting and capacitance matrix formalism} \label{psetting}

We consider a one-dimensional chain of $N$ disjoint resonators $D_i\coloneqq (x_i^{\iL},x_i^{\iR})$, where $(x_i^{\iLR})_{1\<i\<N} \subset \R$ are the $2N$ extremities satisfying $x_i^{\iL} < x_i^{\iR} <  x_{i+1}^{\iL}$ for any $1\leq i \leq N$. We denote by $\ell_i = x_i^{\iR} - x_i^{\iL}$ the length of each of the resonators and by $s_i= x_{i+1}^{\iL} -x_i^{\iR}$ the spacing between the $i$\textsuperscript{th} 
and $(i+1)$\textsuperscript{th} resonators. We denote by
\begin{align} \label{eq:config}
   \mc D \coloneqq \bigcup_{i=1}^N D_i =  \bigcup_{i=1}^N(x_i^{\iL},x_i^{\iR})
\end{align}
the set of resonators; see \cref{fig:settingchap2}.

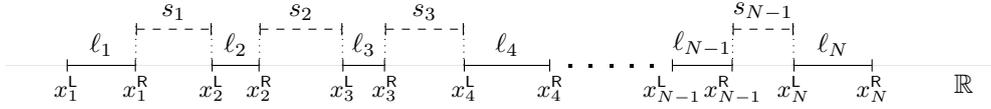
\begin{figure}[!h]
        \centering
        \begin{tikzpicture}[scale=0.8]
            \pgfmathsetseed{8}
            \coordinate (0) at (0,0);
            \foreach \i in {1,2,...,7}{
                \pgfmathsetmacro{\r}{max(2*rnd,0.7)}
                \pgfmathsetmacro{\im}{\i-1}
                \coordinate (\i) at ($(\im)+(\r,0)$);
           };
            \foreach \i in {0,2,...,7}{
                \pgfmathsetmacro{\im}{\i}
                \pgfmathsetmacro{\ip}{\i+1}
                \pgfmathsetmacro{\lin}{int(\i/2+1)}
                \draw[|-|] (\im) -- (\ip) ; 
                \draw ($0.5*(\im)+0.5*(\ip)$) node[above] {$\ell_{\lin}$};
                \draw node[below,yshift=-0.2] at (\im) {\small $x_{\lin}^{\iL}$}; 
                \draw node[below,yshift=-0.2] at (\ip) {\small $x_{\lin}^{\iR}$}; 
            };
            \foreach \i in {1,3,...,6}{
                \pgfmathsetmacro{\im}{\i}
                \pgfmathsetmacro{\ip}{\i+1}
                \pgfmathsetmacro{\lin}{int((\i-1)/2+1)}
                \pgfmathsetmacro{\linp}{int((\i-1)/2+2)}
                \draw[dotted] (\im) -- ($(\im)+(0,0.7)$);
                \draw[dotted] (\ip) -- ($(\ip)+(0,0.7)$);
                \draw[|-|,dashed] ($(\im)+(0,0.6)$) -- ($(\ip)+(0,0.6)$);
                \draw node[above] at ($0.5*(\im)+0.5*(\ip)+(0,0.6)$) {$s_{\lin}$};
            }       
            \coordinate (8) at ($(7)+(2,0)$);
            \coordinate (9) at ($(7)+(3,0)$);
            \coordinate (C) at ($0.5*(7)+0.5*(8)$);
            \draw[|-|] (8) -- (9) ; 
            \draw node[below,yshift=-0.2] at (8) {\small $x_{N-1}^{\iL}$}; 
            \draw node[below,yshift=-0.2] at (9) {\small $x_{N-1}^{\iR}$}; 
            \draw ($0.5*(8)+0.5*(9)$) node[above] {$\ell_{N-1}$};
            \coordinate (10) at ($(9)+(1,0)$);
            \coordinate (11) at ($(9)+(2.3,0)$);
            \draw[|-|] (10) -- (11) ; 
            \draw node[below,yshift=-0.2] at (10) {\small $x_{N}^{\iL}$}; 
            \draw node[below,yshift=-0.2] at (11) {\small $x_{N}^{\iR}$}; 
            \draw ($0.5*(10)+0.5*(11)$) node[above] {$\ell_{N}$};
            \draw[dotted] (9) -- ($(9)+(0,0.7)$);
            \draw[dotted] (10) -- ($(10)+(0,0.7)$);
            \draw[|-|,dashed] ($(9)+(0,0.6)$) -- ($(10)+(0,0.6)$);
            \draw node[above] at ($0.5*(9)+0.5*(10)+(0,0.6)$) {$s_{N-1}$};
            \foreach \i in {-2,-1,0,1,2} {
                \pgfmathsetmacro\c{0.03}
                \coordinate (a) at ($(C)+(\i*1/3,0)$);
                \fill ($(a)-(\c,\c)$)  rectangle ($(a)+(\c,\c)$);};
            \draw[color=black,draw opacity=0.1] ($(0)-(1,0)$) -- (7);
            \draw[color=black,draw opacity=0.1] (8) -- ($(11)+(2,0)$);
            \draw node[below] at ($(11)+(1.5,0)$) {$\R$};
        \end{tikzpicture}
        \caption{A one-dimensional chain of $N$ disjoint resonators.}
        \label{fig:settingchap2}
    \end{figure}

In the first part of this work, we consider the one-dimensional Helmholtz equation: 
\begin{align}
    -\frac{\omega^{2}}{\kappa(x)}u(x)-\frac{\dd}{\dd x}\left( \frac{1}{\rho(x)}\frac{\dd u}{\dd
    x}  (x)\right) =0,\qquad x \in\R,
    \label{eq:HH}
\end{align}
subject to the outgoing radiation conditions:
\begin{align}
\label{eq:SRC}
\frac{\dd u}{\dd \abs{x}}(x) -\i \frac{\omega}{v} u(x) = 0, & \qquad x\in(-\infty,x_1^{\iL})\cup (x_N^{\iR},\infty).
\end{align}
The material parameters are piecewise constant and given by
\begin{align*}
    \kappa(x)=
    \begin{dcases}
        \kappa_i & x\in D_i,\\
        \kappa&  x\in\R\setminus \mc D,
    \end{dcases}\quad\text{and}\quad
    \rho(x)=
    \begin{dcases}
        \rho_b & x\in D_i,\\
        \rho&  x\in\R\setminus \mc D,
    \end{dcases}
\end{align*}
for some constants $\rho_b, \rho, \kappa_i, \kappa$. 
Moreover, we denote by $v_i$ and $v$ the wave speeds inside the resonators $D_i$ and inside the background medium $\R\setminus \mc D$, respectively,  and let 
$r_i$ be the contrast between the wave speeds $v$ and $v_i$. We also denote by  $\delta$ the contrast between the densities of the resonators and the background medium. Therefore,
\begin{align} \label{eq:sympara}
    v_i:=\sqrt{\frac{\kappa_i}{\rho_b}}, \qquad v:=\sqrt{\frac{\kappa}{\rho}}, \qquad
    \delta:=\frac{\rho_b}{\rho}, \qquad r_i= \frac{v}{v_i}.
\end{align}
Throughout this paper, we assume that $\delta$ and $v$ are positive, while the wave speeds $v_i$ in the resonators may be \emph{complex}. 

Since the material parameters are piecewise constants, equation \eqref{eq:HH} together with \eqref{eq:SRC} can be written as 
\begin{align}\label{eq:HH_coupled}
    \begin{dcases}
    \frac{{\dd}^{2} u}{\dd x^2}(x) +\frac{\omega^2}{v_i^2}u(x)=0, & x\in D_i,\\
    \frac{\dd^2 u}{\dd x^2}(x)  + \frac{\omega^2}{v^2}u(x)=0, & x\in\R\setminus \mc D,\\
        u\vert_{\iR}(x^{\iLR}_i) - u\vert_{\iL}(x^{\iLR}_i) = 0, & \text{for all } 1\leq i\leq N,\\
        \left.\frac{\dd u}{\dd x}\right\vert_{\iR}(x^{\iL}_{{i}})=\delta\left.\frac{\dd u}{\dd x}\right\vert_{\iL}(x^{\iL}_{{i}}), & \text{for all } 1\leq i\leq N,\\
        \delta\left.\frac{\dd u}{\dd x}\right\vert_{\iR}(x^{\iR}_{{i}})=\left.\frac{\dd u}{\dd x}\right\vert_{\iL}(x^{\iR}_{{i}}), & \text{for all } 1\leq i\leq N,
    \end{dcases}
\end{align}
subject to the following boundary conditions:
\begin{equation} \label{eq:RadiationBC}
\left. \frac{\dd u}{\dd x} \right \vert_{\iL}(x_1^{\iL}) +\i \frac{\omega}{v} u(x_1^{\iL}) = 0, \quad \left. \frac{\dd u}{\dd x} \right \vert_{\iR}(x_N^{\iR}) -\i \frac{\omega}{v} u(x_N^{\iR}) = 0.
\end{equation}
Here, for a one-dimensional function $w$, we denote by
\begin{align*}
    w\vert_{\iL}(x) \coloneqq \lim_{\substack{s\to 0\\ s>0}}w(x-s) \quad \mbox{and} \quad  w\vert_{\iR}(x) \coloneqq \lim_{\substack{s\to 0\\ s>0}}w(x+s), 
\end{align*}
if the limits exist. 

We investigate the scattering resonances (or the resonant frequencies) in the \emph{high contrast regime} where
\[
    \delta \to 0 \quad \text{while} \quad v, v_i, s_i, \ell_i =\mc O(1).
\] 
The large impedance mismatch due to the high contrast ratio creates very sharp transmission peaks, ensuring that our scattering resonances are of \emph{Fabry-Pérot-type}.
To understand this, we introduce the propagation matrix formalism. Physically, for the given $i$\textsuperscript{th} resonator, the interior propagation matrix $P^\omega_{i,int}$ propagates the solution $u$ from the left interior edge to the right. 
Throughout, we interchangeably use the superscript $'$ and $d/dx$ to denote the derivative with respect to $x$. 

In terms of the Dirichlet and Neumann data $(u(x), u'(x))^\top$, $P^\omega_{i,int}$ satisfies
\begin{equation} \label{def:propga}
\Big(u(x_{i}^{\iR})\big|_{\iL}, u'(x_{i}^{\iR})\big|_{\iL}\Big)^\top = P^\omega_{i,int} \Big(u(x_i^{\iL})\big|_{\iR}, u'(x_{i}^{\iL})\big|_{\iR}\Big)^\top,
\end{equation}
where the superscript $\top$ denotes the transpose and 
\begin{equation*}
P^\omega_{i,int}=
\begin{pmatrix}
\cos(\frac{\omega}{v_i}\ell_i) & \frac{v_i}{\omega}\sin(\frac{\omega}{v_i}\ell_i) \\
-\frac{\omega}{v_i}\sin(\frac{\omega}{v_i}\ell_i) & \cos(\frac{\omega}{v_i}\ell_i)
\end{pmatrix} .
\end{equation*}
Similarly, the exterior propagation matrix propagates the wave across the $i$\textsuperscript{th} gap between the resonators as follows:
\begin{equation*}
\Big(u(x_{i+1}^{\iL})\big|_{\iL}, u'(x_{i+1}^{\iL})\big|_{\iL}\Big)^\top = P^\omega_{i,ext} \Big(u(x_i^{\iR})\big|_{\iR}, u'(x_{i}^{\iR})\big|_{\iR}\Big)^\top,
\end{equation*}
with
\begin{equation*}
P^\omega_{i,ext}=
\begin{pmatrix}
\cos(\frac{\omega}{v}s_i) & \frac{v}{\omega}\sin(\frac{\omega}{v}s_i) \\
-\frac{\omega}{v}\sin(\frac{\omega}{v}s_i) & \cos(\frac{\omega}{v}s_i)
\end{pmatrix} .
\end{equation*}
To account for the boundary conditions between the resonator and the gap through the contrast parameter $\delta$ in \eqref{eq:HH_coupled}, we introduce the following matrix:
\begin{equation*}
P_{ext\to int}:=\begin{pmatrix}
1 & 0 \\ 0 & \delta
\end{pmatrix}.
\end{equation*}
Putting these ingredients together, the propagation matrix $P^\omega_i \in \SL(2,\mathbb{C})$ for the $i$\textsuperscript{th} block (\emph{i.e.}, the  $i$\textsuperscript{th} resonator plus the $i$\textsuperscript{th} gap) is given by
\begin{equation} \label{pomegai}
P^\omega_{i}:=P^\omega_{i,ext}P_{ext\to int}^{-1}P^\omega_{i,int}P_{ext\to int},
\end{equation}
and satisfies the following identity:
\begin{equation*}
\Big(u(x_{i+1}^{\iL})\big|_{\iL}, u'(x_{i+1}^{\iL})\big|_{\iL}\Big)^\top = P^\omega_{i} \Big(u(x_i^{\iL})\big|_{\iL}, u'(x_{i}^{\iL})\big|_{\iL}\Big)^\top .
\end{equation*}
Here, $\SL(2,\mathbb{C})$ denotes the set of $2$ by $2$ complex matrices with determinant one.  The propagation matrix $P^\omega_{i}$ has the following expression:
{
\small\begin{align*}
   \begin{psmallmatrix}
      \ds  \cos(\frac{\omega}{v} s_i)\cos(\frac{\omega}{v_i} \ell_i)- \frac{r_i}{\delta} \sin(\frac{\omega}{v} s_i)  \sin(\frac{\omega}{v_i} \ell_i) & \ds
          \frac{v}{\omega} \cos(\frac{\omega}{v_i} \ell_i)\sin(\frac{\omega}{v} s_i) +  \frac{v_i}{\delta \omega}  \cos(\frac{\omega}{v} s_i) \sin(\frac{\omega}{v_i} \ell_i) \\
            \nm
         \ds   -\frac{\omega}{v}\cos(\frac{\omega}{v_i} \ell_i)\sin(\frac{\omega}{v} s_i) -\frac{\omega}{\delta v_i}\,\sin (\frac{\omega}{v_i} \ell_i) \cos (\frac{\omega}{v} s_i) & \ds
            \cos(\frac{\omega}{v} s_i)\cos(\frac{\omega}{v_i} \ell_i)
            -\frac{1}{\delta r_i} \sin (\frac{\omega}{v} s_i) \sin(\frac{\omega}{v_i} \ell_i)
        \end{psmallmatrix}.
\end{align*}}%
In the case $i=N$, the separation distance $s_N$ \emph{after} the final resonator is not well-defined. In fact, for the later characterisation of resonant frequencies, this choice is completely arbitrary. To still define $P_N^\omega$ in a consistent manner, we choose $s_N = 0$, which causes $P^\omega_{N, ext}$ to be equal to the identity matrix.  

One of the key advantages of the propagation matrix approach is that $P^\omega_{i}$ depends only on the material parameters of the $i$\textsuperscript{th} block, which makes many analytical approaches significantly more tractable and natural. 
Based on \eqref{pomegai}, we define the total propagation matrix $P^\omega_{tot}$ for the configuration \eqref{eq:config} as
\begin{equation} \label{ptotal}
P^\omega_{tot} \coloneqq P_{N}^\omega\cdots P_1^\omega.
\end{equation}

By matching the boundary conditions \eqref{eq:RadiationBC}, we arrive at the following characterisation of the resonant frequencies:
\begin{equation} \label{eq:characterisationExp}
    \omega \text{ is a resonant frequency of }\cref{eq:HH_coupled} \iff 
\det\left(P_{tot}^{\omega}
\begin{pmatrix}
    1 \\ -\i \frac{\omega}{v}
\end{pmatrix}\middle|
\begin{pmatrix}
    1 \\ \i \frac{\omega}{v}
\end{pmatrix}
\right) =0.
\end{equation}

\begin{definition}[Characteristic determinant]
For a finite resonator array, a key object is the \emph{characteristic determinant}
    \begin{equation} \label{eq:chdet}
        f(\omega;\delta):= \mathrm{det}\left(P_{tot}^{\omega}
    \begin{pmatrix}
        1 \\ -\i \frac{\omega}{v}
    \end{pmatrix}\middle|
    \begin{pmatrix}
        1 \\ \i \frac{\omega}{v}
    \end{pmatrix}
    \right).
    \end{equation}
\end{definition}

As $\delta\to 0$, the characteristic determinant converges uniformly on any compact set in $\omega$. Moreover, the following characterisation of the zeros of $f(\omega;0)$ holds; 
see \cite[Lemma 3.5]{pm1}.
\begin{lemma}\label{lem:deltazero_resonances}
    The zeros of $f(\omega;0)$ are given by the countable and discrete set
\[
E = \left\{\omega \mid \ell_j\frac{\omega}{v_j} \in \pi\Z\,,\ 1 \le j \le N \right\} \bigcup \left\{\omega \mid s_j\frac{\omega}{v} \in \pi\Z\,,\ 1 \le j \le N - 1 \right\}.
\]
Moreover, the order of a zero $\omega \in E$ is given by
\begin{align*}
    n(\omega) = \#\left\{j \mid \ell_j\frac{\omega}{v_j} \in \pi\Z\,,\ 1\leq j\leq N\right\} +  \#\left\{j \mid  s_j\frac{\omega}{v} \in \pi\Z\,,\ 1 \le j \le N - 1 \right\}.
\end{align*}
\end{lemma}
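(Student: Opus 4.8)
The plan is to identify the limit $f(\omega;0)$ explicitly by isolating the leading-order term of $P^\omega_{tot}$ as $\delta\to 0$, and then to read off the zero set and the multiplicities from the resulting product. Throughout I abbreviate $\theta_i=\omega\ell_i/v_i$ and $\phi_i=\omega s_i/v$. I would first expand each block. Writing $P^\omega_i=P^\omega_{i,ext}\big(P_{ext\to int}^{-1}P^\omega_{i,int}P_{ext\to int}\big)$ and computing the conjugated interior factor gives
\[
P_{ext\to int}^{-1}P^\omega_{i,int}P_{ext\to int}=\begin{pmatrix}\cos\theta_i & \tfrac{\delta v_i}{\omega}\sin\theta_i\\[2pt] -\tfrac{\omega}{\delta v_i}\sin\theta_i & \cos\theta_i\end{pmatrix},
\]
whose only singular contribution as $\delta\to 0$ is the $(2,1)$-entry. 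Hence $P^\omega_i=\tfrac1\delta U_i+V_i+O(\delta)$, where $U_i=\mathbf a_i\mathbf e_1^\top$ is rank one and supported in its first column, with $\mathbf a_i=-\tfrac{\omega}{v_i}\sin\theta_i\,(\tfrac{v}{\omega}\sin\phi_i,\ \cos\phi_i)^\top$ (the second column of $P^\omega_{i,ext}$, rescaled). For $i=N$ one has $s_N=0$, so $P^\omega_{N,ext}=I$ and $\mathbf a_N=(0,\ -\tfrac{\omega}{v_N}\sin\theta_N)^\top$.

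The structural heart of the argument is that all singular parts share the same left factor $\mathbf e_1^\top$, so that $U_jU_i=(\mathbf e_1^\top\mathbf a_i)\,U_j=(-r_i\sin\phi_i\sin\theta_i)\,U_j$. Since each factor of $P^\omega_{tot}=P^\omega_N\cdots P^\omega_1$ carries at most one power of $\delta^{-1}$, the coefficient of $\delta^{-N}$ is exactly $U_N\cdots U_1$, and the identity above telescopes this product to
\[
U_N\cdots U_1=\Big(\prod_{i=1}^{N-1}(-r_i\sin\phi_i\sin\theta_i)\Big)U_N,
\]
a matrix whose single nonzero entry sits in position $(2,1)$. Writing $M=P^\omega_{tot}$ and expanding the characteristic determinant as $f=\i\tfrac{\omega}{v}(M_{11}+M_{22})+\tfrac{\omega^2}{v^2}M_{12}-M_{21}$, only the $-M_{21}$ term survives at leading order, and collecting constants yields
\[
f(\omega;0)=C\,\omega\,\prod_{i=1}^{N-1}\sin\!\Big(\tfrac{\omega}{v}s_i\Big)\prod_{i=1}^{N}\sin\!\Big(\tfrac{\omega}{v_i}\ell_i\Big),\qquad C=\frac{(-1)^{N-1}}{v_N}\prod_{i=1}^{N-1}r_i\neq 0.
\]

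From this factorisation the statement is immediate. For $\omega\neq0$ the prefactor $C\omega$ is nonzero, so $f(\omega;0)=0$ precisely when one of the sine factors vanishes, i.e. when $\omega\ell_j/v_j\in\pi\Z$ for some $1\le j\le N$ or $\omega s_j/v\in\pi\Z$ for some $1\le j\le N-1$; this is exactly $E$. Each such condition defines an arithmetic progression (real since $v>0$ in the gap case, and along a ray for the possibly complex $v_j$), so $E$ is countable and discrete. Finally, every factor $\sin(c\omega)$ with $c\neq0$ is entire with only simple zeros, and orders of vanishing add under multiplication; hence the order of a zero $\omega\in E$ equals the number of vanishing factors, which is precisely $n(\omega)$.

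I expect the main obstacle to lie in the first two steps: verifying that the singular parts $U_i$ all have first-column support — the feature that collapses their product to rank one — and confirming that no other combination contributes at order $\delta^{-N}$. One should also keep the normalisation in view: the determinant itself grows like $\delta^{-N}$, so ``$f(\omega;0)$'' is to be understood as the limit of the renormalised determinant $\delta^{N}f(\omega;\delta)$, and the degenerate frequency $\omega=0$ (where the extra factor $\omega$ appears) must be excluded or treated separately.
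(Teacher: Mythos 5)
The paper itself offers no proof of this lemma --- it is quoted verbatim from Lemma 3.5 of \cite{pm1} --- so your attempt can only be judged on its own merits. On those merits, the core of your argument is correct and is almost certainly the intended one. Writing $\theta_i=\omega\ell_i/v_i$, $\phi_i=\omega s_i/v$: your conjugated interior factor, the rank-one structure $U_i=\mathbf a_i\mathbf e_1^\top$ of the $\delta^{-1}$-parts, the telescoping identity $U_jU_i=(-r_i\sin\phi_i\sin\theta_i)\,U_j$, and the limit $\delta^N f(\omega;\delta)\to C\,\omega\,\prod_{i=1}^{N-1}\sin(\omega s_i/v)\prod_{i=1}^{N}\sin(\omega\ell_i/v_i)$ all check out; for $N=1$ one gets directly $\delta f(\omega;\delta)\to\frac{\omega}{v_1}\sin(\omega\ell_1/v_1)$, confirming your constant $C$. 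You are also right that the renormalisation $\delta^N f$ is forced (the unrenormalised determinant blows up like $\delta^{-N}$), and right, incidentally, \emph{against} the paper's displayed formula for $P_i^\omega$: that display has $\delta\leftrightarrow 1/\delta$ typos in its $(1,2)$ and $(2,2)$ entries, since as printed its determinant would be $\cos^2\theta_i+\delta^{-2}\sin^2\theta_i\neq 1$, contradicting $P_i^\omega\in\SL(2,\C)$. Your expansion, computed from the definition \eqref{pomegai}, is the correct one.

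The genuine gap is the point $\omega=0$, which you flag (``must be excluded or treated separately'') but do not resolve --- and it cannot simply be excluded, because $0\in E$ and the lemma's order formula claims $n(0)=N+(N-1)=2N-1$, whereas your limit function vanishes to order $2N$ at $\omega=0$ because of the extra prefactor $\omega$. This is a real discrepancy, not a presentational one: with the paper's definition \eqref{eq:chdet}, $f(0;\delta)=0$ for every $\delta$ (the two radiation vectors coincide at $\omega=0$), and for $N=1$ one finds exactly \emph{two} zeros of $f(\cdot;\delta)$ near $0$ (namely $0$ itself and $\omega\approx-2\i\delta v_1^2/(v\ell_1)$), matching your order $2N=2$ rather than the claimed $2N-1=1$. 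The missing step is a normalisation: define the characteristic function as $\det\bigl(P^\omega_{tot}r_-\,\big|\,r_+\bigr)/\det\bigl(r_-\,\big|\,r_+\bigr)$ with $r_\pm=(1,\pm\i\omega/v)^\top$, i.e.\ divide by $2\i\omega/v$. This removes the trivial, $\delta$-independent zero at $\omega=0$, leaves every other zero and its order untouched, and makes your product formula reproduce $n(\omega)$ at \emph{every} point of $E$, including $0$. Without that normalisation (presumably the convention of the cited source), the statement you are proving is off by one at $\omega=0$ under the paper's own definition of $f$, so your proof as written establishes the lemma only on $E\setminus\{0\}$.
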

This result justifies the transmission comb observed in \cref{fig:dimer_transmission_comb} for a dimer system ($N=2$) with $\ell_1=\ell_2 = s=1$ and $v = v_i = 1$. The threefold zeros of $f(\omega; 0)$ around $\omega=0,\pi$ split into three distinct zeros with small imaginary parts for small $\delta>0$. We let  the transmission coefficient $T(\omega)$ for real $\omega$ be defined by
$$
T(\omega):= |u(x_N^{\iR})|,
$$
where $u$ is the solution to the scattering problem \eqref{eq:HH_coupled} with the boundary conditions
$$
\frac{\dd u}{\dd x}(x_1^{\iL}) +\i \frac{\omega}{v} u(x_1^{\iL}) = 2 \i \frac{\omega}{v} e^{\i \frac{\omega}{v} x}, \quad \frac{\dd u}{\dd x}(x_N^{\iR}) -\i \frac{\omega}{v} u(x_N^{\iR}) = 0.
$$
\cref{fig:dimer_transmission_comb} exhibits the combs with $2N-1$ peaks for $T(\omega)$ that are typical for Fabry-Pérot-type resonances.

\begin{figure}[!h]
    \centering
    \includegraphics[width=0.6\linewidth]{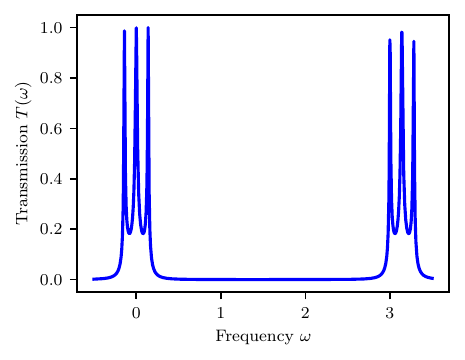}
    \caption{The transmission coefficient $T(\omega)$ for a dimer system ($N=2$, $\ell_i = v_i = v = s = 1$, $\delta = 10^{-1}$) exhibits transmission peaks forming a characteristic Fabry-Pérot-type comb-like structure.}
    \label{fig:dimer_transmission_comb}
\end{figure}

In the asymptotic regime, when $\delta \rightarrow 0$, the leading order behaviour of these scattering resonances is described by discrete capacitance matrix formulations \cite{pm1, pm2}. We let the vectors $\mathfrak{\bm t}=(\mathfrak{t}_1,\cdots,\mathfrak{t}_{2N-1})^\top$ and $\bm t = (t_1, \dots, t_{2N-1})^\top$ be defined by 
$$
    \mathfrak{\bm t}:=(r_1\ell_1,\ s_{1},\ r_2\ell_{2},\ s_{2},\ \cdots,\ r_{N-1}\ell_{N-1},\ s_{N-1},\ r_N\ell_N)^\top\in\R^{2N-1}_{>0},
    $$
    and
$$
    \bm t:=(r_1^2\ell_1,\ s_{1},\ r_2^2\ell_{2},\ s_{2},\ \cdots,\ r_{N-1}^2\ell_{N-1},\ s_{N-1},\ r_N^2\ell_N)^\top\in\R^{2N-1}_{>0}.
    $$
 For $\omega \in \R$ and $0 \leq j \leq 2N$, we introduce the coefficients
\[
c_j(\omega) = 
\begin{cases}
\dfrac{1}{t_j} & \text{if } \pi \mid \mathfrak{t}_j \frac{\omega}{v}, \\
0 & \text{otherwise},
\end{cases}
\quad \text{for } 1\leq j \leq 2N-1,
\]
and $c_0(\omega) = c_{2N}(\omega) = 0$. Here, $a\mid b$ means that $b/a$ is an integer. 
We let $$\theta_j(\omega) := c_j(\omega) c_{j+1}(\omega)  \qquad \text{for } 1\leq j \leq 2N-2,$$ and define the generalised capacitance matrix as the following $N \times N$ tridiagonal matrix:
\begin{align}\label{eq:mathcalCk_def}
\Tilde{\mc C}(\omega) := 
\begin{psmallmatrix}
    \theta_1(\omega) & -\theta_1(\omega) & & & \\
    -\theta_2(\omega) & \theta_2(\omega) + \theta_3(\omega) & -\theta_3(\omega) & & \\
    & -\theta_4(\omega) & \theta_4(\omega) + \theta_5(\omega) & -\theta_5(\omega) & \\
    & & \ddots & \ddots & \ddots \\
    & & & -\theta_{2N-4}(\omega) & \theta_{2N-4}(\omega) + \theta_{2N-3}(\omega) & -\theta_{2N-3}(\omega) \\
    & & & & -\theta_{2N-2}(\omega) & \theta_{2N-2}(\omega)
\end{psmallmatrix}.
\end{align}

The following result is from \cite{pm2}.
\begin{lemma}\label{thm:leadingorder}
    Let $\omega_0$ be such that $f(\omega_0;0)=0$  and let $\lambda_1(\omega_0), \dots, \lambda_m(\omega_0)$ be the $m$ nonzero eigenvalues of the generalised capacitance matrix $\Tilde{\mc C}(\omega_0)$. Then, as $\delta \to 0$, the $n=n(\omega_0)$ zeros of $f(\omega;\delta)$ near $\omega_0$ admit, after reordering, the following asymptotic expansions: 
    \begin{equation}
        \begin{cases}
            \omega^\pm_i(\delta) = \omega_0 \pm v \sqrt{\delta \lambda_i(\omega_0)}+o(\sqrt{\delta}), & i = 1,\dots, m,\\
            \omega_i(\delta) = \omega_0 + \mc{O}(\delta), & i = 2m+ 1,\dots, n.
        \end{cases}
    \end{equation}
\end{lemma}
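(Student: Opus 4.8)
The plan is to analyse the characteristic determinant $f(\omega;\delta)$ locally near $(\omega_0,0)$ by combining an argument-principle count with a block expansion of the total propagation matrix, and then to reduce the resulting $2\times 2$ block determinant to the characteristic polynomial of $\Tilde{\mc C}(\omega_0)$. First I would record the expansion of each block in powers of $\delta$: writing $\sigma_i(\omega)=\sin(\tfrac{\omega}{v_i}\ell_i)$ and $\gamma_i(\omega)=\cos(\tfrac{\omega}{v_i}\ell_i)$, one has
$$P^\omega_i=\tfrac{1}{\delta}A_i(\omega)+\gamma_i(\omega)\,P^\omega_{i,ext}+\delta\,E_i(\omega),$$
where both $A_i=\sigma_i\Tilde A_i$ and $E_i$ are rank one and $\Tilde A_i\in\SL(2,\C)$. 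Multiplying $f$ by a suitable power $\delta^{p}$ produces a function $\hat f(\omega;\delta)$ analytic up to $\delta=0$ and not identically zero there, whose limit $\hat f(\cdot;0)$ has, by \Cref{lem:deltazero_resonances}, a zero of order exactly $n(\omega_0)$ at $\omega_0$. Rouché's theorem then guarantees that, for all sufficiently small $\delta>0$, $f(\cdot;\delta)$ has exactly $n=n(\omega_0)$ zeros (with multiplicity) in a fixed small disk around $\omega_0$; it remains to locate them.

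The core step is a rescaled expansion. Setting $\omega=\omega_0+v\sqrt{\delta}\,\xi$, I would expand $P^\omega_N\cdots P^\omega_1$ in powers of $\sqrt\delta$. The key mechanism is that for every resonator that is resonant at $\omega_0$ one has $\sigma_i(\omega_0)=0$, hence $A_i(\omega_0)=0$ and $\sigma_i(\omega)\approx\sigma_i'(\omega_0)\,v\sqrt\delta\,\xi$, so that the singular term $\tfrac{1}{\delta}A_i$ contributes precisely at order $\xi/\sqrt\delta$ — the order at which adjacent nodal data couple. Collecting the surviving terms of $f$ at this critical order, and exploiting that two resonators interact only through an intervening resonant gap (which is exactly the content of $\theta_j=c_jc_{j+1}$), I expect the determinant to collapse to the $N\times N$ tridiagonal form, yielding the leading balance
$$\det\!\big(\mu\,I-\Tilde{\mc C}(\omega_0)\big)=0,\qquad \mu:=\frac{(\omega-\omega_0)^2}{v^2\delta}.$$
Each nonzero eigenvalue $\lambda_i(\omega_0)$ then gives $\mu=\lambda_i(\omega_0)$, i.e. $\omega^\pm_i(\delta)=\omega_0\pm v\sqrt{\delta\lambda_i(\omega_0)}+o(\sqrt\delta)$, producing the first $2m$ resonances.

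Finally I would treat the remaining $n-2m$ zeros. These correspond to the kernel of $\Tilde{\mc C}(\omega_0)$ together with resonant segments (in particular resonant gaps) that do not couple two adjacent resonant resonators; for all of these the $\sqrt\delta$-balance degenerates, so the $\xi$-scale carries no information and one must rescale at the finer level $\omega-\omega_0=\mc{O}(\delta)$, where a secondary balance shows each such zero satisfies $\omega_i(\delta)=\omega_0+\mc{O}(\delta)$, completing the count to $n$. I expect the reduction in the previous paragraph to be the main obstacle: rigorously showing that the product of $2\times 2$ blocks, after the rescaling, has determinant asymptotic to $\det(\mu I-\Tilde{\mc C}(\omega_0))$ up to a nonvanishing factor requires careful bookkeeping of which products of $A_i$, $\gamma_iP^\omega_{i,ext}$ and $E_i$ survive at the critical order, and a proof that the induced couplings reproduce exactly the nearest-neighbour pattern encoded by the coefficients $c_j(\omega_0)$.
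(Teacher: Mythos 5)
First, a point of comparison: the paper itself contains \emph{no} proof of this lemma --- it is imported verbatim from \cite{pm2} (``The following result is from \cite{pm2}'') --- so your proposal has to stand on its own. Its skeleton (exact $\delta$-expansion of each block, normalisation of $f$ by a power of $\delta$ plus Rouch\'e to pin the count $n(\omega_0)$ via \cref{lem:deltazero_resonances}, then the rescaling $\omega=\omega_0+v\sqrt{\delta}\,\xi$) is indeed the natural route and consistent with the generalised-capacitance formalism of \cite{pm2}. Two factual corrections before the main issue. The matrix displayed in the paper after \eqref{pomegai} is not unimodular (its determinant is $\cos^2(\tfrac{\omega}{v_i}\ell_i)+\delta^{-2}\sin^2(\tfrac{\omega}{v_i}\ell_i)$, so it contains typos); deriving the expansion from the definition \eqref{pomegai} itself, one gets precisely your structure $P^\omega_i=\tfrac1\delta A_i+\cos(\tfrac{\omega}{v_i}\ell_i)P^\omega_{i,ext}+\delta E_i$ with $A_i$ and $E_i$ of rank one. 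But then your claim ``$\Tilde A_i\in\SL(2,\C)$'' is self-contradictory: a rank-one matrix has determinant zero. The rank-one property is in fact what you need, since it makes the coefficient of $\delta^{-N}$ in $f$ factor as $\tfrac{\omega}{v_N}\prod_i\sin(\tfrac{\omega\ell_i}{v_i})\prod_{i<N}(-r_i\sin(\tfrac{\omega s_i}{v}))$, which is what reproduces \cref{lem:deltazero_resonances} with the correct multiplicities.

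The genuine gap is that the only nontrivial step --- the collapse of the rescaled determinant onto the characteristic polynomial of $\Tilde{\mc C}(\omega_0)$ --- is exactly the step you label ``I expect''; the Rouch\'e count and the rescaling ansatz are routine, so this collapse \emph{is} the lemma, and it is not proven. Moreover, the limit you assert cannot be literally correct. Viewed as a polynomial in $\xi$ (with $\mu=\xi^2$), $\det\bigl(\mu I-\Tilde{\mc C}(\omega_0)\bigr)$ has degree $2N$, whereas your own Rouch\'e step shows that the rescaled family has exactly $n\leq 2N-1$ zeros in any fixed disk $\abs{\xi}\leq R$ once $\delta$ is small; Hurwitz's theorem then forces the locally uniform limit to have exactly $n$ zeros there. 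The true limit must therefore be a nonzero constant times $\xi^{\,n-2N}\det\bigl(\xi^2I-\Tilde{\mc C}(\omega_0)\bigr)$: the kernel of $\Tilde{\mc C}(\omega_0)$ \emph{overcounts} the slow resonances, of which there are $n-2m$, not $2(N-m)$. This monomial correction leaves the nonzero roots $\pm\sqrt{\lambda_i}$ untouched, so the first line of the asymptotics would survive, but it invalidates the way you pass from ``kernel of $\Tilde{\mc C}(\omega_0)$'' to the remaining zeros; and the $\mc O(\delta)$ bound for those zeros requires a separate balance at the scale $\omega=\omega_0+\delta\eta$, with its own nondegeneracy argument, which you only gesture at. Until the collapse identity (with the corrected monomial factor) and this second balance are actually established, the proposal is a programme rather than a proof.
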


Note that in the subwavelength regime, when we search for resonant frequencies that are in a neighborhood of zero (\emph{i.e.}, for $\omega_0=0$), the generalised capacitance matrix formalism reduces to the study of $\Tilde{\mc C}(0)$, which was first introduced in \cite{feppon.cheng.ea2023Subwavelength}. It is also worth noting that if the wave speeds $v_i$ are all real positive, then  $\Tilde{\mathcal{C}}(0)$ is always diagonalisable, whereas this is not necessarily the case for general $\omega_0 \in \{\omega\mid f(\omega;0)= 0\}$. The diagonalisability of $\Tilde{\mathcal{C}}(0)$ when the wave speeds are all positive follows from the fact that $V \Tilde{\mathcal{C}}(0)$,
where $V$ is the diagonal positive definite matrix given by $V:= \mathrm{diag}(t_1,t_3,\dots, t_{2N-1})$, is a real symmetric matrix and hence $\Tilde{\mathcal{C}}(0)$ is similar to $\sqrt{V^{-1}} V \Tilde{\mathcal{C}}(0) \sqrt{V^{-1}}$.

\section{Exceptional points}\label{sec:ep}

In this section, our aim is to elucidate the variety of exceptional points that arise in the present setting, both under leading-order (in $\delta$) non-Hermiticity caused by complex material parameters and weaker non-Hermiticity arising solely from radiation conditions. We begin by rigorously defining exceptional points for our problem and linking them to higher-order singularities of the Green's function of the resonator system, as well as to higher-order zeros of the characteristic determinant $f(\omega;\delta)$ in both the subwavelength and non-subwavelength regimes.

\subsection{Exceptional points and Green's function}
We first introduce the following definition of an exceptional point.
\begin{definition}[Exceptional point]
    Consider a finite array of resonators with a characteristic determinant $f(\omega;\delta)$. For some scattering resonance $\omega_0\in \C$, we define the \emph{algebraic multiplicity} $m_a(\omega_0)$ as the order of the zero $\omega_0$ of $f(\omega; \delta)$ and let the \emph{geometric multiplicity} $m_g(\omega_0)$ be the dimension of the space of solutions of \cref{eq:HH_coupled} when $\omega=\omega_0$. 
    We say that $\omega_0$ is an \emph{exceptional point} of  \cref{eq:HH_coupled}  if $m_a(\omega_0) > m_g(\omega_0)$.
\end{definition}
When $\delta > 0$, the propagation matrix enables the unique propagation of a solution through the resonator array, ensuring that $m_g(\omega_0) \leq 1$. Consequently, any higher-order zero of $f(\omega;\delta)$ corresponds to an exceptional point.

The key property of exceptional points is that degenerate resonances exhibit \emph{higher-order sensitivity} with respect to generic perturbations \cite[Chapter II, 1.2]{kato1995Perturbation}. In particular, for an exceptional point of order two, the two degenerate resonant frequencies $\omega_\pm$ perturb as
\[
    \omega_\pm = \omega_0 \pm \sqrt{\lambda\varepsilon}
\]
for the perturbation size $\varepsilon>0$ and some constant $\lambda>0$. 
As a consequence of this discussion, we therefore expect this \emph{square-root} sensitivity to system perturbation whenever two resonant frequencies of a system coincide.

We next demonstrate that the Green's function associated with problem \eqref{eq:HH_coupled} exhibits higher-order poles (order $> 1$) at exceptional resonant frequencies. The inhomogeneous problem corresponding to \eqref{eq:HH_coupled}, for some $f \in L^2(\R)$ supported on $(x_1^{\iL}, x_N^{\iR})$, is given by
\begin{equation} \label{eq_dimmer_inhomo_no_bc}
\begin{dcases}
    \frac{{\dd}^{2} u}{\dd x^2}(x) +\frac{\omega^2}{v_i^2}u(x)=f, & x\in D_i,\\
    \frac{\dd^2 u}{\dd x^2}(x)  + \frac{\omega^2}{v^2} u(x)=f, & x\in\R\setminus \mc D,\\
        u\vert_{\iR}(x^{\iLR}_i) - u\vert_{\iL}(x^{\iLR}_i) = 0, & \text{for all } 1\leq i\leq N,\\
        \left.\frac{\dd u}{\dd x}\right\vert_{\iR}(x^{\iL}_{{i}})=\delta\left.\frac{\dd u}{\dd x}\right\vert_{\iL}(x^{\iL}_{{i}}), & \text{for all } 1\leq i\leq N,\\
        \delta\left.\frac{\dd u}{\dd x}\right\vert_{\iR}(x^{\iR}_{{i}})=\left.\frac{\dd u}{\dd x}\right\vert_{\iL}(x^{\iR}_{{i}}), & \text{for all } 1\leq i\leq N,
    \end{dcases}
\end{equation}
with
\begin{equation} \label{eq_radiation_bc}
\big(u^{\prime}+\i \frac{\omega}{v} u\big) (x_1^{\iL})\big|_{\iL} = 0, \quad \big(u^{\prime}-\i \frac{\omega}{v} u\big) (x_N^{\iR})\big|_{\iR} = 0.
\end{equation}
Whenever \eqref{eq_dimmer_inhomo_no_bc} is well-posed, it is solved by
\begin{equation*}
u(x)=\int_{x_1^{\iL}}^{x_N^{\iR}}G(x,y;\omega)f(y)\dd y,
\end{equation*}
where the Green function is given by (cf. \cite[Sec.\,2.3, Chap.\,3]{kato1995Perturbation} and the references therein)
\begin{equation} \label{eq_dimer_green_func}
G(x,y;\omega)=\left\{
\begin{aligned}
&-\frac{u_1(x;\omega)u_2(y;\omega)}{W(y;\omega)},\quad x\leq y,\\
&-\frac{u_2(x;\omega)u_1(y;\omega)}{W(y;\omega)},\quad x\geq y.\\
\end{aligned}
\right.
\end{equation}
Here, $u_1(x;\omega)$ solves the homogeneous equation associated with \eqref{eq_dimmer_inhomo_no_bc} and \textit{the left boundary condition} $\big(u_1^{\prime}+\i \frac{\omega}{v} u_1\big) (x_1^{\iL};\omega)\big|_{\iL}=0$, $u_2$ solves the same equation but \textit{with a right boundary condition}, and $W$ is the Wronskian of $u_1$ and $u_2$. 

From \eqref{eq_dimer_green_func}, it follows that the singularity of the Green function is determined by the zeros of the Wronskian, as the divisor does not vanish identically as a function on $[x_1^{\iL}, x_N^{\iR}]$. Furthermore, since the Wronskian is either non-vanishing or identically vanishing, studying its zeros is equivalent to analysing the zeros of the boundary value $W(x_N^{\iR})$: 
\begin{equation*}
W(x_N^{\iR})=\det
\begin{pmatrix}
u_1(x_N^{\iR};\omega) & u_2(x_N^{\iR};\omega) \\
u_1^{\prime}(x_N^{\iR};\omega) & u_2^{\prime}(x_N^{\iR};\omega)
\end{pmatrix}
\propto\mathrm{det}\left(P_{tot}^{\omega}
    \begin{pmatrix}
        1 \\ -\i \frac{\omega}{v}
    \end{pmatrix}\middle|
    \begin{pmatrix}
        1 \\ \i \frac{\omega}{v}
    \end{pmatrix}
    \right).
\end{equation*}
Hence, we conclude that the following result holds. 
\begin{proposition}
The (highest) order of the poles of the Green function $G(x, y; \omega)$ is equal to the (highest) order of the zeros of $f(\omega; \delta)$. 
\end{proposition}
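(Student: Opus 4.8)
The plan is to read off the poles of $G$ directly from its representation \eqref{eq_dimer_green_func} and to compare, at each resonance $\omega_0$, the order of vanishing in $\omega$ of the denominator against that of the numerator. Writing $G(x,y;\omega)=N(x,y;\omega)/W(y;\omega)$ with numerator $N=-u_1(x;\omega)u_2(y;\omega)$ for $x\le y$ and $N=-u_2(x;\omega)u_1(y;\omega)$ for $x\ge y$, both $N$ and $W$ are holomorphic in $\omega$ on a neighbourhood of any resonance $\omega_0\neq 0$, since $u_1$ and $u_2$ are assembled from the propagation matrices $P^\omega_i$, whose entries are holomorphic there. Hence $G(x,y;\cdot)$ is meromorphic, and for fixed $(x,y)$ its pole order at $\omega_0$ equals $\mathrm{ord}_{\omega_0}W(y;\cdot)-\mathrm{ord}_{\omega_0}N(x,y;\cdot)$ whenever this is positive. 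The highest pole order of $G$ at $\omega_0$ is thus obtained by \emph{minimising} the order of vanishing of the numerator over all admissible $(x,y)$, and the statement reduces to two claims: that $\mathrm{ord}_{\omega_0}W(y;\cdot)$ is independent of $y$ and equals $\mathrm{ord}_{\omega_0}f(\cdot;\delta)$, and that the numerator can be made nonvanishing at $\omega_0$.

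For the first claim I would use Abel's identity together with the transmission conditions. On each homogeneous subinterval the equation has no first-order term, so $W(y;\omega)$ is constant in $y$; across each interface the matching conditions in \eqref{eq:HH_coupled} multiply the Neumann data by $\delta$ or $\delta^{-1}$, so $W(y;\omega)$ picks up the corresponding nonzero multiplicative factor. Consequently $W(y_1;\omega)=\delta^{k}W(y_2;\omega)$ for an integer $k$ depending only on the number of interfaces between $y_1$ and $y_2$, and since $\delta^{k}$ is a nonzero constant independent of $\omega$, the order of the zero at $\omega_0$ is the same for every $y$. Combining this with the proportionality $W(x_N^{\iR};\omega)\propto f(\omega;\delta)$ already recorded in the excerpt, whose factor is nonvanishing near $\omega_0$, yields $\mathrm{ord}_{\omega_0}W(y;\cdot)=\mathrm{ord}_{\omega_0}f(\cdot;\delta)$ for all $y$.

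For the second claim I would exploit that $\omega_0$ is a resonance: a nontrivial solution of \eqref{eq:HH_coupled} exists, so $m_g(\omega_0)\ge 1$, while the propagation-matrix argument recalled after the definition of exceptional point gives $m_g(\omega_0)\le 1$ for $\delta>0$; hence $m_g(\omega_0)=1$. At $\omega_0$ the functions $u_1(\cdot;\omega_0)$ and $u_2(\cdot;\omega_0)$ both satisfy the homogeneous problem subject to both boundary conditions (this is exactly the vanishing of $W$), so by simplicity they are nonzero scalar multiples of a single nontrivial resonant mode $\varphi$. As $\varphi$ is a nonzero solution of a second-order ODE its zeros are isolated; choosing $z_0$ with $\varphi(z_0)\neq 0$ and evaluating on the diagonal gives $N(z_0,z_0;\omega_0)=-u_1(z_0;\omega_0)u_2(z_0;\omega_0)\neq 0$. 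Thus $G(z_0,z_0;\cdot)$ has a pole at $\omega_0$ of order exactly $\mathrm{ord}_{\omega_0}f$, whereas for every $(x,y)$ the pole order is at most $\mathrm{ord}_{\omega_0}W(y;\cdot)=\mathrm{ord}_{\omega_0}f$ because the numerator is holomorphic. Taking the highest order over all resonances $\omega_0$ then gives the asserted equality.

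I expect the main obstacle to be the bookkeeping in the second claim: one must ensure that the numerator is not forced to vanish to the same order as $W$ for \emph{all} $(x,y)$, which would collapse the pole. The diagonal evaluation circumvents this, but it relies on the geometric simplicity $m_g(\omega_0)=1$ and on $u_1,u_2$ being genuine nonzero multiples of one another at $\omega_0$ rather than degenerating. A secondary technical point is the behaviour at $\omega_0=0$, where the $\omega^{-1}$ entries of the propagation matrices must be shown to cancel so that $u_1,u_2,W$ remain holomorphic; this can be treated separately or excluded from the resonance set.
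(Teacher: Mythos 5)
Your proposal follows essentially the same route as the paper: both read the poles of $G$ off the representation \eqref{eq_dimer_green_func}, reduce their order to the vanishing order of the Wronskian (using that this order does not depend on $y$), and conclude via the proportionality $W(x_N^{\iR};\omega)\propto f(\omega;\delta)$. In fact your write-up supplies precisely the details the paper leaves implicit --- the Abel-identity/interface-jump argument for the $y$-independence of $\mathrm{ord}_{\omega_0}W$, and the geometric-simplicity/diagonal-evaluation argument showing the numerator does not vanish at $\omega_0$ --- so it is a correct, fleshed-out version of the paper's own proof.
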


Note that $f(\omega;\delta)$ is an analytic function of $\omega$. Therefore, we arrive at the following characterisation of the highest order of the zeros of $f(\omega; \delta)$. 
\begin{proposition}
The highest order of zeros of $f(\omega;\delta)$ equals $n$ if and only if
\begin{equation*}
\sum_{i+j+k=m}\det\left(\big(\partial_{\omega}^{i}P_{tot}^{\omega}\big) 
\partial_{\omega}^{j} \begin{pmatrix}
    1 \\ -\i \frac{\omega}{v}
\end{pmatrix} \middle| 
\partial_{\omega}^{k}\begin{pmatrix}
    1 \\ \i \frac{\omega}{v}
\end{pmatrix}
\right)=0\quad \text{for any $m< n$}.
\end{equation*}
\end{proposition}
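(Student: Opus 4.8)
The plan is to reduce the proposition to the standard fact that, for an analytic function, the order of a zero is read off from its derivatives, and then to compute those derivatives by exploiting the multilinearity of the determinant. The text immediately preceding the statement records that $f(\omega;\delta)$ is analytic in $\omega$. Hence a zero $\omega_0$ has order \emph{at least} $n$ exactly when $\partial_\omega^m f(\omega_0;\delta)=0$ for all $0\le m<n$, and order \emph{equal} to $n$ exactly when, in addition, $\partial_\omega^n f(\omega_0;\delta)\neq 0$. (The displayed condition is to be read at the relevant zero $\omega_0$, and the ``equal to $n$'' clause tacitly includes the nonvanishing at $m=n$.) All of the content therefore lies in identifying $\partial_\omega^m f$ with the displayed determinantal sum.

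To compute $\partial_\omega^m f$, I would regard the characteristic determinant as a trilinear form in its three $\omega$-dependent factors. Writing $v_-(\omega):=(1,-\i\tfrac{\omega}{v})^\top$ and $v_+(\omega):=(1,\i\tfrac{\omega}{v})^\top$, the map $(A,w_1,w_2)\mapsto\det(Aw_1\mid w_2)$ is linear separately in $A\in\C^{2\times2}$ and in $w_1,w_2\in\C^2$, and $f(\omega;\delta)=\det\big(P_{tot}^{\omega} v_-(\omega)\mid v_+(\omega)\big)$ is this form evaluated at $A=P_{tot}^{\omega}$, $w_1=v_-(\omega)$, $w_2=v_+(\omega)$. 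The generalized Leibniz rule for the $m$-th derivative of a trilinear expression in one variable then gives
\[
\partial_\omega^m f(\omega;\delta)=\sum_{i+j+k=m}\frac{m!}{i!\,j!\,k!}\,\det\!\Big(\big(\partial_\omega^i P_{tot}^{\omega}\big)\,\partial_\omega^j v_-(\omega)\,\Big|\,\partial_\omega^k v_+(\omega)\Big),
\]
which is, up to the multinomial weights, precisely the sum in the statement. Here one may further note that $\partial_\omega^j v_\pm=0$ for $j\ge 2$, so only the terms with $j,k\in\{0,1\}$ survive, which makes the sum short and fully explicit.

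Combining the two steps yields the proposition: $\omega_0$ is a zero of order $n$ iff the above determinantal sum vanishes at $\omega_0$ for every $m<n$ and is nonzero at $m=n$. The computation itself is routine tri-linear bookkeeping, so the only genuine point of care --- and the step I would check most carefully --- is the role of the multinomial coefficients $\tfrac{m!}{i!\,j!\,k!}$: they all equal $1$ for $m\le 1$, but for $m\ge 2$ they genuinely reweight the individual determinants, so it is the \emph{weighted} sums (i.e.\ the true derivatives $\partial_\omega^m f$) whose vanishing characterizes the order. I would therefore either keep these weights in the displayed identity, or else verify directly that in the regime of interest dropping them does not move the vanishing locus.
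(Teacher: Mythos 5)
Your proposal is correct and is essentially the paper's own argument: the paper offers no proof beyond the remark that $f(\omega;\delta)$ is analytic in $\omega$, so the intended content is exactly your two steps --- reading the order of a zero off the derivatives $\partial_\omega^m f$, and computing those derivatives by the generalised Leibniz rule applied to the trilinear form $(A,w_1,w_2)\mapsto\det(Aw_1\mid w_2)$.

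Your closing caution about the multinomial weights is not merely a point of care; it identifies a genuine defect in the proposition as displayed in the paper. Writing $v_\pm(\omega)=(1,\pm\i\tfrac{\omega}{v})^\top$ and using $\partial_\omega^2 v_\pm=0$, the true second derivative is
\[
\partial_\omega^2 f=\det\bigl(\partial_\omega^2 P_{tot}^\omega\, v_-\mid v_+\bigr)
+2\Bigl[\det\bigl(\partial_\omega P_{tot}^\omega\, \partial_\omega v_-\mid v_+\bigr)
+\det\bigl(\partial_\omega P_{tot}^\omega\, v_-\mid \partial_\omega v_+\bigr)
+\det\bigl(P_{tot}^\omega\, \partial_\omega v_-\mid \partial_\omega v_+\bigr)\Bigr],
\]
whereas the paper's unweighted sum counts each cross term once. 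The vanishing of the unweighted sums for all $m<n$ is therefore not, in general, equivalent to the vanishing of the derivatives for all $m<n$ once $n\ge 3$: nothing in the hypotheses forces the cross-term combination $\det(\partial_\omega P_{tot}^\omega\,\partial_\omega v_-\mid v_+)+\det(\partial_\omega P_{tot}^\omega\, v_-\mid\partial_\omega v_+)+\det(P_{tot}^\omega\,\partial_\omega v_-\mid\partial_\omega v_+)$ to vanish at a double zero of $f$. So the statement should carry the weights $\tfrac{m!}{i!\,j!\,k!}$ (as in your derivation), and your version --- together with the observation that only terms with $j,k\in\{0,1\}$ survive, and that ``equals $n$'' requires additionally the nonvanishing at $m=n$ --- is the correct and complete form of the result.
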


\subsection{Exceptional points in the limit $\delta=0$}
A subtle but key distinction is the differentiation between exceptional points that induce higher-order sensitivity to system parameter perturbations and exceptional points in the limit $\delta = 0$. At this limit, the system exhibits a variety of exceptional points leading to a slower convergence of the resonant frequencies as $\delta\to 0$. However, at $\delta=0$ these frequencies do not display higher-order dependence on the remaining system parameters, distinguishing them from typical exceptional points. 

\begin{figure}
    \centering
    \includegraphics[width=0.9\linewidth]{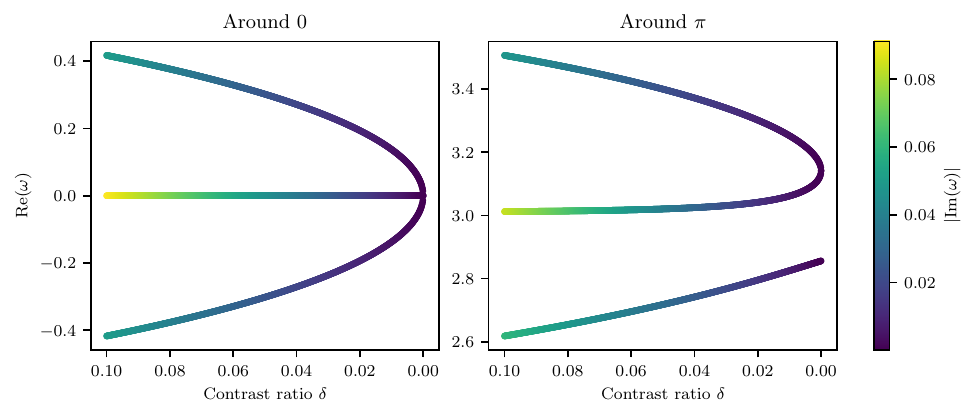}
    \caption{Convergence of resonant frequencies as $\delta\to 0$. We consider a dimer $N=2$ with equal wave speeds $v_1=v_2=v = s=\ell_1=1$ and $\ell_2=1.1$. The mismatch in resonator length leads to different asymptotic convergence rates around $0$ and $\pi$.}
    \label{fig:convergence_dimer}
\end{figure}

The characterisation in Lemma \ref{lem:deltazero_resonances} clearly indicates that, at $\delta = 0$, the resonant frequencies vary linearly or inversely linearly with respect to the system parameters $(\ell_j, v_j, s_j)$. However, \cref{thm:leadingorder} suggests that when resonant modes interact locally, they give rise to exceptional points at $\delta = 0$, resulting in slower (lower-order) convergence.

This behavior is illustrated in \cref{fig:convergence_dimer} for a perturbed dimer system with resonator lengths $\ell_1=1$ and $\ell_2 = 1.1$, focusing on resonances near $0$ and $\pi$. Around $0$, all eigenmodes interact, resulting in $\sqrt{\delta}$-order convergence. However, the situation is more nuanced around $\pi$: due to $s = \ell_1 = 1$, the resonant mode of the first resonator interacts with that of the gap between the resonators, again yielding $\sqrt{\delta}$-order convergence to the limiting mode. In contrast, the resonant mode of the second resonator does not interact with the eigenmode of the gap between the resonators due to the length mismatch, resulting in $\delta$-order convergence. 

\subsection{Approximate exceptional points from leading-order non-Hermiticity}\label{ssec:complex_ep}

In this subsection, we examine the exceptional points arising from leading-order non-Hermiticity, induced by complex material parameters, similar to the analysis in \cite{exceptpoint1}.

By \cref{thm:leadingorder}, the first-order asymptotics of the resonant frequencies are determined by the generalised capacitance matrix \eqref{eq:mathcalCk_def}, which as noted before is Hermitian at $\omega=0$ for systems with real system parameters. This leads to the following result.

\begin{proposition}\label{prop:no_ep}
    Consider a resonator array  with real system parameters $v_i, \delta \in \R$. Then there exists $\delta_0>0$ such that for all $0<\delta<\delta_0$, the characteristic determinant $f(\cdot; \delta)$ has only simple roots in a neighbourhood of $0$. In particular, such a system cannot exhibit any exceptional points in the subwavelength regime.
\end{proposition}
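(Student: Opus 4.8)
The plan is to identify, at $\delta=0$, the unique cluster of zeros of $f(\,\cdot\,;0)$ sitting at $\omega_0=0$, determine how many zeros of $f(\,\cdot\,;\delta)$ emanate from it, and then use the leading-order asymptotics of \cref{thm:leadingorder} together with the spectral structure of $\Tilde{\mc C}(0)$ to show these zeros are pairwise distinct, hence simple. First I would apply \cref{lem:deltazero_resonances} at $\omega_0=0$: since $0\in\pi\Z$ every index contributes, so $0$ is a zero of $f(\,\cdot\,;0)$ of order
\[
n(0) = N + (N-1) = 2N-1,
\]
and it is isolated in $E$ (the next points of $E$ are bounded away from $0$). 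By the uniform convergence $f(\,\cdot\,;\delta)\to f(\,\cdot\,;0)$ on compacts and the argument principle, for small $\delta$ the zeros of $f(\,\cdot\,;\delta)$ in a fixed small disk about $0$ are exactly the $2N-1$ zeros (with multiplicity) tracked by \cref{thm:leadingorder}. It therefore suffices to show these $2N-1$ zeros are pairwise distinct for all small $\delta>0$: distinctness forces $m_a=1$ at each, and since $\delta>0$ gives $m_g\le 1$ while any resonance has $m_g\ge 1$, we get $m_a=m_g=1$ and no exceptional point.

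Second I would analyse $\Tilde{\mc C}(0)$. With real parameters $r_i\in\R$ one has $t_j>0$, and at $\omega=0$ every congruence $\pi\mid\mathfrak t_j\frac{\omega}{v}$ holds, so $c_j(0)=1/t_j$ and $\theta_j(0)=1/(t_jt_{j+1})>0$. Thus $\Tilde{\mc C}(0)$ is real tridiagonal with strictly nonzero off-diagonals. Conjugation by $V=\diag(t_1,t_3,\dots,t_{2N-1})$ makes it symmetric (the text's observation that $V\Tilde{\mc C}(0)$ is symmetric), so $\Tilde{\mc C}(0)$ is similar to a symmetric tridiagonal \emph{Jacobi} matrix whose off-diagonal entries remain nonvanishing. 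Two classical facts then apply: a Jacobi matrix with nonzero off-diagonals has a \emph{simple} spectrum, and the entries of $\Tilde{\mc C}(0)$ yield a weighted graph-Laplacian quadratic form, so it is positive semidefinite; moreover all its row sums vanish, so the constant vector lies in its kernel. Hence $0$ is a simple eigenvalue and the remaining $N-1$ eigenvalues $\lambda_1,\dots,\lambda_{N-1}$ are real, positive and \emph{pairwise distinct}; in the notation of \cref{thm:leadingorder} this means $m=N-1$.

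Third I would combine these. \cref{thm:leadingorder} now produces $2m=2N-2$ zeros
\[
\omega_i^{\pm}(\delta) = \pm v\sqrt{\delta\lambda_i} + o(\sqrt\delta),\qquad i=1,\dots,N-1,
\]
together with exactly $n-2m=1$ remaining zero $\omega(\delta)=\mc{O}(\delta)$. Because $v>0$ and the $\lambda_i$ are positive and distinct, the $2N-2$ leading coefficients $\pm v\sqrt{\lambda_i}$ are mutually distinct and nonzero, so the $\omega_i^{\pm}$ are pairwise distinct once $\delta$ is small; the single $\mc{O}(\delta)$ root lives on a strictly smaller scale than the $\Theta(\sqrt\delta)$ branches and so collides with none of them. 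Thus all $2N-1$ zeros occupy distinct locations, each is therefore simple, and choosing $\delta_0$ small enough that these estimates and the separation hold yields the claim.

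The main obstacle is the second step: the whole argument hinges on the nonzero eigenvalues of $\Tilde{\mc C}(0)$ being \emph{distinct}, since coincident eigenvalues would make two $\sqrt\delta$-branches share a leading term and could create a genuine higher-order zero. This is precisely what the Jacobi (tridiagonal, nonzero off-diagonal) structure guarantees, so the technical heart is verifying that the symmetrisation by $V$ preserves nonvanishing off-diagonals and that $0$ is a \emph{simple} eigenvalue (ensuring exactly one $\mc{O}(\delta)$ branch). A secondary point requiring care is the bookkeeping in \cref{thm:leadingorder}: its list of $n$ zeros is counted with multiplicity, so establishing that the listed asymptotic values are distinct is exactly what upgrades ``distinct locations'' to ``simple zeros.''
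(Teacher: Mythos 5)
Your proposal is correct, and it follows the same overall route that the paper takes: combine \cref{thm:leadingorder} at $\omega_0=0$ with spectral properties of the generalised capacitance matrix $\Tilde{\mc C}(0)$. The difference lies in what is actually established about $\Tilde{\mc C}(0)$. The paper states the proposition immediately after observing that, for real parameters, $V\Tilde{\mc C}(0)$ is real symmetric, so that $\Tilde{\mc C}(0)$ is similar to a Hermitian matrix and hence diagonalisable; no further proof is given. As you correctly identify in your ``main obstacle'' paragraph, diagonalisability alone does not yield the stated conclusion: a Hermitian matrix may still have a repeated nonzero eigenvalue, in which case two of the $\sqrt{\delta}$-branches in \cref{thm:leadingorder} share the same leading term, and the asymptotics cannot exclude their actual coincidence --- which, since $m_g\le 1$ whenever $\delta>0$, would be precisely an exceptional point. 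Your additional ingredient closes this gap: the symmetrised matrix $\sqrt{V}\,\Tilde{\mc C}(0)\sqrt{V^{-1}}$ is tridiagonal with nonvanishing off-diagonal entries (a Jacobi matrix), hence has \emph{simple} spectrum, and the weighted path-Laplacian structure of $V\Tilde{\mc C}(0)$ gives positive semidefiniteness with a one-dimensional kernel spanned by the constant vector. This upgrades ``diagonalisable'' to ``the $N-1$ nonzero eigenvalues are positive and pairwise distinct,'' which is exactly what separates all $2N-1$ branches ($2N-2$ at scale $\sqrt{\delta}$ plus one at scale $\delta$) and forces every root to be simple. In short, your blind proof is a completed and more rigorous version of the paper's sketch rather than a genuinely different approach; the Jacobi-matrix simplicity argument is the key step the paper leaves implicit.
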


Therefore, exceptional points in the subwavelength regime must arise from leading-order non-Hermiticity introduced by complex system parameters. In particular, these exceptional points appear at the level of the capacitance matrix $\Tilde{\mc C}(0)$ and are therefore stable with respect to $\delta$.

\begin{figure}[h]
    \centering
    \begin{subfigure}[t]{0.4\textwidth}
        \centering
        \includegraphics[width=\textwidth]{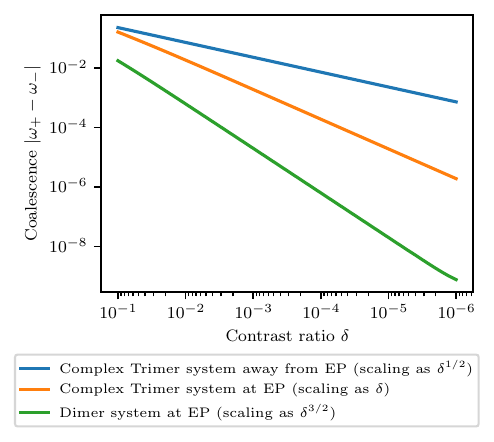}
        \caption{Rate of coalescence of $\omega_\pm$ for varying system configurations.}
        \label{sfig:coalescence_rate}
    \end{subfigure}%
    \hfill
    \begin{subfigure}[t]{0.5\textwidth}
        \centering
        \includegraphics[width=\textwidth]{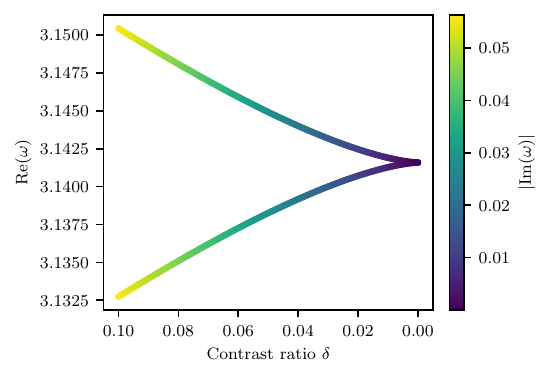}
        \caption{$\mc O(\delta^\frac{3}{2})$-order convergence of $\omega_\pm$ to $\pi$ for the dimer system.}
        \label{sfig:radiation_ep_coalescence}
    \end{subfigure}
    \caption{Coalescence of resonant frequencies as $\delta\to 0$. For the trimer system with complex material parameters from \cref{ssec:complex_ep}, $\omega_\pm$ denote the two resonances converging to $0$ with positive real part as $\delta\to 0$. We compare the cases at and away from the exceptional point by tuning $\theta$ to $\theta=\pi/4$ and $\theta=0$. 
    For the dimer system from \cref{ssec:rad_ep}, $\omega_\pm$ denote the two resonances converging to $\pi$. At exceptional points, we observe higher-order coalescence of the corresponding resonances.
    }
    \label{fig:coalescence}
\end{figure}

We investigate a trimer of resonators ($N=3$) with complex material parameters parametrised by $(v_1, v_2, v_3) = (e^{\i \theta}, 1, e^{-\i\theta})$ for $\theta\in [0,\pi]$ with $s_i=\ell_i=1$ for all $i=1,2,3$. The generalised capacitance matrix \eqref{eq:mathcalCk_def} at $\omega = 0$ is then given by 
$$
\Tilde{\mc C}(0) = \begin{pmatrix}
        e^{2\i \theta}&- e^{2\i \theta}&0\\
        -1 & 2 & -1\\
        0 & -  e^{-2\i\theta} & e^{-2\i\theta}
    \end{pmatrix},
$$
and its eigenvalues are 
\[
    (\lambda_1, \lambda_2, \lambda_3) = (0,1,e^{2\i\theta}+1+e^{-2\i\theta}).
\]
Therefore, $\Tilde{\mc C}(0)$ reaches its first exceptional point at $\theta_c = \pi/4$, where the eigenvalue $\lambda_2 = \lambda_3 = 1$ possesses an algebraic multiplicity of $2$ and a geometric multiplicity of $1$. 

\begin{remark}
    We have chosen a trimer of resonators with parity-time-symmetry ($v_1= \overline{v_3}; v_2 \in \mathbb{R}$) in order to obtain an exceptional point for the capacitance matrix  that is away from zero (which is not possible for dimer systems).  Exceptional points of a capacitance matrix that are at zero require special considerations since they lead to a third-order degeneracy of the characteristic determinant $f(\omega; \delta)$ at $\delta=0$, resulting in asymptotic cube-root behaviour.
\end{remark}

As the contrast $\delta \to 0$, \cref{thm:leadingorder} indicates that the subwavelength resonances $\omega_i$ in the positive half-plane are, in leading order, determined by the eigenvalues \(\lambda_i\), \emph{i.e.},  
\[
    \omega_i = \sqrt{\delta\lambda_i} + \mc {O}(\delta).
\]
Therefore, we expect that the resonant frequencies associated with $\lambda_2$ and $\lambda_3$ will go to $0$ with order $\mc{O}(\sqrt{\delta})$, but they will coalesce with higher order at the critical value $\theta_c = \pi/4$:
\[
    \omega_3-\omega_2 = \begin{cases}
        \mc O(\delta) & \text{if } \theta=\pi/4, \\
        \mc O(\sqrt{\delta}) & \text{else}.
    \end{cases}
\]
Here, $\omega_3,\omega_2$ denote the two subwavelength resonant frequencies with positive real parts as $\delta\to 0$. In fact, this coalescence dichotomy is observed in \cref{fig:coalescence}(\textsc{a}) for the trimer system at and away from the exceptional point after tuning $\theta$ (in particular, we compare the coalescence behaviour for  $\theta=\pi/4$ and $\theta=0$).

For small $\delta>0$, the higher order coalescence of the resonant frequencies acts as an approximate exceptional point. In particular, the eigenvalues of $\Tilde{\mc C}(0)$ and thus the leading-order behaviour of $\omega_1, \omega_2$ display higher-order sensitivity to system parameters. This is demonstrated in \cref{fig:ep_stability}(\textsc{a}) where, as the first resonator size $\ell_1$ is varied around $1$, the resonant frequencies $\omega_1, \omega_2$ demonstrate the characteristic higher-order sensitivity to perturbations. At this point, we would also like to note that, while this constitutes only an approximate exceptional point, the discussion from the following subsection would suggest that there is an exact exceptional point in a system-parameter neighbourhood.

Furthermore, it is worth emphasising that while this construction yields approximate exceptional points in the subwavelength regime as $\delta\to 0$, it is unable to do so at higher scattering resonances. In particular, \cref{thm:leadingorder} implies that two resonators $D_i$ and $D_j$ can only interact at higher resonances if $\ell_i v_i \equiv \ell_j v_j\; [\pi]$ . However, because the above construction requires that the wave speeds are complex conjugate to each other, this implies that, as $\delta\to 0$, resonators with distinct complex wave speeds can maintain their interaction only in the subwavelength regime.

\subsection{Approximate exceptional points from radiation losses}\label{ssec:rad_ep}

In the subwavelength regime, \cref{prop:no_ep} prevents the possibility of obtaining exceptional points purely from non-Hermiticity due to radiation losses. However, the non-subwavelength regime offers the possibility of resonator interactions of the same order as the radiation losses as $\delta\to 0$. We therefore aim to construct a system in which the resonators interact weakly. To do so, we consider a dimer system ($N=2$) and let $s=1/2$ and $\ell_2=v_2=1$. 
To ensure that the isolated frequency of the first resonator matches that of the second one, we then choose the ansatz $\ell_1 = v_1 = \theta$ and aim to tune $\theta(\delta)$ until the system passes through an exceptional point. 

\begin{figure}
    \centering
    \includegraphics[width=1\textwidth]{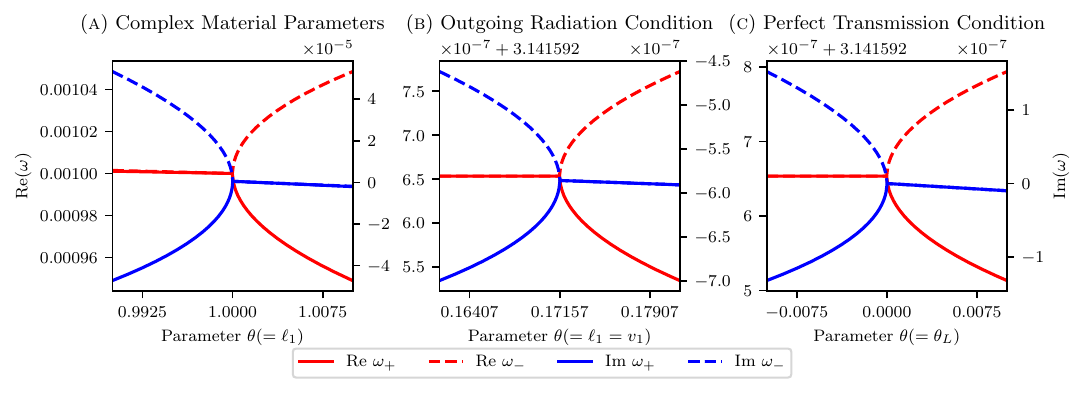}
    \caption{Characteristic square-root sensitivity to parameter perturbations around the approximate and exact exceptional points for the systems from \cref{ssec:complex_ep,ssec:rad_ep,ssec:prc_ep}. For all systems, we choose $\delta=10^{-6}$ and observe that the exceptional points closely match the square-root sensitivity profile with respect to $\theta$. 
    }
    \label{fig:ep_stability}
\end{figure}

Choosing $s/v \neq \ell_1/v_1=\ell_2/v_2=1$
plays the crucial role of limiting the interaction between the two resonators. In this case, $\Tilde{\mc C}(\omega_0 = \pi)$ is the zero matrix and \cref{thm:leadingorder} leads to the leading order ansatz $\omega_\pm = \pi + \lambda_\pm\delta + \mc O(\delta^{\frac{3}{2}})$ for the two resonant frequencies converging to $\pi$ as $\delta\to 0$. Here, $\lambda_\pm$ are the leading order parameters. Plugging this into $f(\omega;\delta)$ yields
\[
    f(\pi + \lambda\delta; \delta) = -2\pi + \i\pi (1+\frac{1}{\ell_1})\lambda + \frac{\pi}{\ell_1}\lambda^2 + \mc O(\delta).
\]
The solutions of this polynomial of degree two yield the leading order behaviour $\lambda_\pm$ of the two resonant frequencies approaching $\pi$, and setting its discriminant to zero yields the condition for an asymptotic exceptional point
\[
   -1+6\theta-\theta^2 = 0,
\]
determining the critical values $\theta_c = 3\pm 2\sqrt{2}$.
At these critical values, the ansatz $\omega_\pm = \pi + \lambda_\pm\delta + \mc O(\delta^{\frac{3}{2}})$ suggests a coalescence of $\omega_\pm$ with order $\mc O(\delta^{3/2})$, yielding an approximate exceptional point. This coalescence behaviour is demonstrated in \cref{fig:coalescence}. In \cref{fig:ep_stability}(\textsc{b}), we observe that the approximate exceptional point indeed also displays the expected square-root sensitivity to perturbations in $\theta$ around the critical values.

\begin{remark}
    The $\mc O(\delta^{3/2})$-order coalescence of $\omega_\pm$ suggests that the system rapidly achieves a high-quality approximate exceptional point as $\delta\to0$. By expanding $\omega_\pm$ to higher orders of $\delta$ and possibly varying the system parameters with $\delta$, the coalescence speed and thus the exceptional point quality can be improved even further.
\end{remark}

\subsection{Exact exceptional points from parity-time symmetric radiation conditions}\label{ssec:prc_ep}

In the previous two subsections, approximate exceptional points were achieved by ensuring the coalescence of the leading-order behavior of resonances as $\delta \to 0$. However, while \cref{eq:HH_coupled} inherently includes a source of non-Hermiticity via the outgoing radiation conditions, the energy gain and loss imposed by these conditions are \enquote{unbalanced}, making it challenging to realise exact exceptional points explicitly. 

\begin{figure}[ht]
    \centering
    \includegraphics[width=0.8\linewidth]{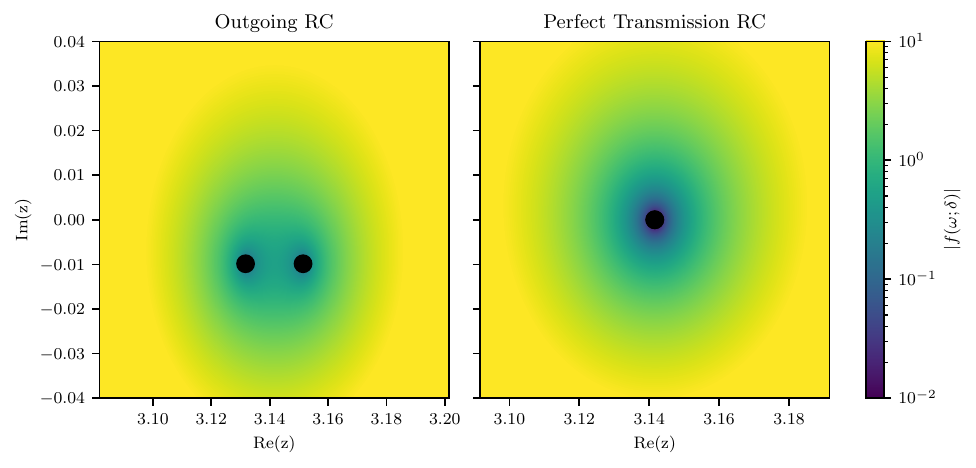}
    \caption{Resonances, marked by black dots, as characterised by solutions of $f(\omega;\delta)=0$ for the standard outgoing radiation conditions and solutions of $\Tilde{f}(\omega;\delta)=0$ for the perfect transmission radiation conditions. We consider the modified dimer with $N=2, \ell_i = 1, v_i=1$ and $s_1 = 3/2$ and $\delta=10^{-2}$.}
    \label{fig:RC_comparison}
\end{figure}

On the operator level, the desired type of \enquote{balance} is expressed as parity-time ($\mc{PT}$) symmetry condition.

\begin{definition}[Parity-time symmetry]
  For functions $u \in L^2(\R)$, we define the \emph{parity operator} $\mc{P}[u](x) := u(-x)$ and the \emph{time-reversal operator} $\mc{T}[u](x) := \overline{u(x)}$, where $\overline{\cdot}$ denotes complex conjugation. An operator $L$ on $L^2(\R)$ is said to be \emph{$\mc{PT}$-symmetric} if it satisfies $[L, \mc{PT}] = 0$.
\end{definition}
An immediate consequence of $\mc{PT}$-symmetry is that the resonant frequencies of $L$ are symmetric with respect to the real axis in the complex plane. This is significant because it compels resonant frequencies to coalesce on the real line, leading to the formation of exceptional points.

To achieve $\mc{PT}$-symmetry at the differential equation level, we choose a spatially reflection symmetric resonator array together with a modified radiation condition known as the \emph{perfect transmission condition}; see \cite{fleury.sounas.ea2015invisible, PhysRevResearch.3.013223}. The modified boundary conditions are:
\begin{equation} \label{eq:RadiationBCPerfect}
\frac{\dd u}{\dd x}(x_1^{\iL}) - \i \frac{\omega}{v} u(x_1^{\iL}) = 0, \quad \frac{\dd u}{\dd x}(x_N^{\iR}) -\i \frac{\omega}{v} u(x_N^{\iR}) = 0.
\end{equation}

Therefore, instead of \eqref{eq:chdet}, the scattering resonance problem corresponds to investigating the zeros of the following analytic function: 
\[
\Tilde{f}(\omega;\delta):= \det \left(P_{tot}^{\omega}
\begin{pmatrix}
    1 \\ \i \frac{\omega}{v}
\end{pmatrix}\middle|
\begin{pmatrix}
    1 \\ \i \frac{\omega}{v}
\end{pmatrix}
\right).
\]
The flipped sign of the radiation condition at the left edge ensures only incoming radiation at this edge, which enforces perfect transmission. We now consider a symmetric dimer system with $N=2, \ell_i = 1, v=v_i=1$, parametrised by the spacing $s$. The resulting differential equation model is $\mc{PT}$-symmetric and we explicitly find that  
\[
    \Tilde{f}(\omega;\delta) = \omega\frac{\delta^2-1}{\delta^2} \sin(\omega) \left[ -2 \delta \cos(\omega) \cos\left(\omega s\right) + \left(1 + \delta^2\right) \sin(\omega) \sin\left(\omega s\right) \right].
\]
This immediately shows that the system has a resonant frequency $\omega = \pi$ for any $\delta > 0$ and $s > 0$. However, when the spacing $s$ is chosen as one of the critical values in $\frac{1}{2} + \Z_{\geq 0}$, the zero becomes of order two, resulting in an exact and explicit exceptional point for any $\delta > 0$.

In \cref{fig:RC_comparison}, we show the two closest resonances to $\omega=\pi$ of the dimer system, both under standard outgoing radiation conditions \eqref{eq:RadiationBC} and under perfect transmission radiation conditions \eqref{eq:RadiationBCPerfect}, demonstrating the role the radiation conditions play in facilitating this exceptional point. 

\cref{fig:ep_stability}(\textsc{c}) demonstrates square-root perturbation of the resonant frequencies as the left boundary conditions are varied around the perfect transmission condition. The left boundary condition of the system is parametrised by $\theta_L \in [0,2\pi)$ and enforced by requiring,
\[
    \begin{pmatrix}
        u(x_1^{\iL})\\
        \frac{\dd u}{\dd x}\rvert_{\iL}(x_1^{\iL})
    \end{pmatrix} \propto \cos \theta_L \begin{pmatrix}
        1\\
        \i\omega
    \end{pmatrix} + \sin \theta_L \begin{pmatrix}
        1\\
        -\i\omega
    \end{pmatrix}.
\]
\section{Non-reciprocal systems} \label{sec:NR}
We now turn our attention to Fabry-Pérot resonances in non-reciprocal systems. To that end, we modify \cref{eq:HH} by introducing a non-reciprocal imaginary gauge potential (\cite{ammariMathematicalFoundationsNonHermitian2024,yokomizo}) and obtain the following generalised Strum-Liouville equation:
\begin{align}
    -\frac{\omega^{2}}{\kappa(x)}u(x)- 2\gamma(x){\frac{1}{\rho(x)}} \frac{\dd u}{\dd x}(x)-\frac{\dd}{\dd x}\left( \frac{1}{\rho(x)}\frac{\dd u}{\dd
    x}  (x)\right) =0,\qquad x \in\R,
    \label{eq: gen Sturm-Liouville}
\end{align}
again subject to the outgoing radiation conditions
\begin{align}
\label{radiation}
\frac{\dd u}{\dd \abs{x}}(x) -\i \frac{\omega}{v} u(x) = 0, \qquad  x\in(-\infty,x_1^{\iL})\cup (x_N^{\iR},\infty).
\end{align}
Here, we introduce the new parameter $\gamma$ in the form of the \emph{imaginary gauge potential}, which is also assumed to be piecewise constant,
\begin{align*}
\gamma(x) = \begin{dcases}
    \gamma_i,\quad x\in D_i,\\
    0,\quad x \in \R\setminus \mc D.
\end{dcases}
\end{align*}

Similarly to \eqref{eq:HH_coupled}, since the material parameters are piecewise constants, \eqref{eq: gen Sturm-Liouville} together with \eqref{radiation} can be written again as 
\begin{align}\label{eq:skin_coupled}
    \begin{dcases}
    \frac{{\dd}^{2} u}{\dd x^2}(x) + 2\gamma_i \frac{\dd u}{\dd x}(x) +\frac{\omega^2}{v_i^2}u(x)=0, & x\in D_i,\\
    \frac{\dd^2 u}{\dd x^2}(x)  + \frac{\omega^2}{v^2}u(x)=0, & x\in\R\setminus \mc D,\\
        u\vert_{\iR}(x^{\iLR}_i) - u\vert_{\iL}(x^{\iLR}_i) = 0, & \text{for all } 1\leq i\leq N,\\
        \left.\frac{\dd u}{\dd x}\right\vert_{\iR}(x^{\iL}_{{i}})=\delta\left.\frac{\dd u}{\dd x}\right\vert_{\iL}(x^{\iL}_{{i}}), & \text{for all } 1\leq i\leq N,\\
        \delta\left.\frac{\dd u}{\dd x}\right\vert_{\iR}(x^{\iR}_{{i}})=\left.\frac{\dd u}{\dd x}\right\vert_{\iL}(x^{\iR}_{{i}}), & \text{for all } 1\leq i\leq N,\\
        \frac{\dd u}{\dd \abs{x}}(x) -\i \frac{\omega}{v} u(x) = 0, & x\in(-\infty,x_1^{\iL})\cup (x_N^{\iR},\infty).
    \end{dcases}
\end{align}
In addition, we choose $L > 0$ minimal such that $\mc D \subset [0,L]$, assuming without loss of generality that $x_1^{\iL}=0$. In contrast to \cref{eq:HH_coupled} from the previous sections, we assume that $\rho(x), \kappa(x)$ are real and positive for the remainder of this section.  We further assume that $v=v_i=1$.
 
\subsection{Propagation matrix approach}\label{ssec:skin_propmat}

In this section, our aim is to understand the non-reciprocal scattering resonance problem \cref{eq:skin_coupled} through its propagation matrix.
As in \cref{psetting}, to obtain the total propagation matrix associated with \eqref{eq:skin_coupled}, we first look for the propagation matrix of a single block, that is, an analogue of \eqref{pomegai}. Comparing \eqref{eq:skin_coupled} with \eqref{eq:HH_coupled}, we see that the gauge potential is applied only within the resonator; thus, only the interior propagation matrix $P^\omega_{i,int}$ requires modification. The modified interior propagation matrix is given as follows.
\begin{lemma} 
Let $P^\omega_{i,\mathrm{int},\mathrm{gauge}}$ denote the modified interior propagation matrix for problem~\eqref{eq:skin_coupled}, defined analogously to~\eqref{def:propga}. 
We have
\begin{equation*}
P^\omega_{i,int,gauge}=e^{-\gamma_i\ell_i}\begin{pmatrix}
1 & 0 \\
-\gamma_i & 1
\end{pmatrix}P^{\sqrt{\omega^2-\gamma_i^2}}_{i,int}
\begin{pmatrix}
1 & 0 \\
\gamma_i & 1
\end{pmatrix} .
\end{equation*}
\end{lemma}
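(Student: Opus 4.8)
The plan is to reduce the gauged interior equation, which on $D_i$ reads $u'' + 2\gamma_i u' + \omega^2 u = 0$ (recall that $v=v_i=1$ is assumed throughout this section), to the ungauged Helmholtz equation by a gauge (exponential) transformation, and then to translate that change of variables into a conjugation of the transfer matrix. First I would set $u(x) = e^{-\gamma_i x} w(x)$ on $D_i$. Computing $u'$ and $u''$ and substituting, the first-order term cancels and the zeroth-order coefficient shifts, so that $w$ solves the constant-coefficient equation $w'' + (\omega^2 - \gamma_i^2) w = 0$. This is exactly the ungauged interior equation at the effective frequency $\tilde\omega \coloneqq \sqrt{\omega^2 - \gamma_i^2}$, whose transfer matrix from the left interior edge to the right interior edge is, by definition, $P^{\tilde\omega}_{i,int}$.

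Next I would record how the Cauchy data transform. Differentiating $u = e^{-\gamma_i x} w$ gives, at any point $x$,
\[
\begin{pmatrix} u(x) \\ u'(x) \end{pmatrix}
= e^{-\gamma_i x}\begin{pmatrix} 1 & 0 \\ -\gamma_i & 1 \end{pmatrix}\begin{pmatrix} w(x) \\ w'(x) \end{pmatrix},
\]
where the constant lower-triangular matrix is invertible with inverse $\begin{pmatrix} 1 & 0 \\ \gamma_i & 1 \end{pmatrix}$. Applying the inverse relation at $x = x_i^{\iL}$, inserting the ungauged propagation identity $\bigl(w, w'\bigr)^\top(x_i^{\iR}) = P^{\tilde\omega}_{i,int}\,\bigl(w, w'\bigr)^\top(x_i^{\iL})$, and applying the forward relation at $x = x_i^{\iR}$, I obtain
\[
\begin{pmatrix} u(x_i^{\iR}) \\ u'(x_i^{\iR}) \end{pmatrix}
= e^{-\gamma_i x_i^{\iR}}\begin{pmatrix} 1 & 0 \\ -\gamma_i & 1 \end{pmatrix} P^{\tilde\omega}_{i,int}\, e^{\gamma_i x_i^{\iL}}\begin{pmatrix} 1 & 0 \\ \gamma_i & 1 \end{pmatrix}\begin{pmatrix} u(x_i^{\iL}) \\ u'(x_i^{\iL}) \end{pmatrix}.
\]
The two scalar exponentials combine into $e^{-\gamma_i(x_i^{\iR} - x_i^{\iL})} = e^{-\gamma_i \ell_i}$, and comparing with the defining relation \eqref{def:propga} for $P^\omega_{i,int,gauge}$ yields precisely the claimed formula.

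The computation is essentially routine, so the only delicate points are bookkeeping ones. First, the signs and placement of $\gamma_i$ in the conjugating matrices are fixed by the direction of the substitution (the inverse matrix acts on the left-edge data, the forward matrix on the right-edge data); these should be verified against the one-sided trace conventions in \eqref{def:propga}, noting that inside $D_i$ the interior limits coincide with the smooth interior values. Second, I would emphasise that the conjugating matrices are independent of $x$, so they factor cleanly out at the two endpoints, while all $x$-dependence lives in the scalar prefactor and telescopes to depend only on the length $\ell_i$; this is what renders the final expression translation-invariant, as it must be. Finally, I would remark that $\tilde\omega = \sqrt{\omega^2 - \gamma_i^2}$ is defined only up to sign, but $P^{\tilde\omega}_{i,int}$ is an even function of its frequency argument, so the statement is independent of the chosen branch.
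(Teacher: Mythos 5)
Your proof is correct and follows essentially the same route as the paper's: the gauge substitution $u=e^{-\gamma_i x}w$ reduces the interior equation to the ungauged Helmholtz equation at the effective frequency $\sqrt{\omega^2-\gamma_i^2}$, and the conjugating matrices arise from rewriting the Cauchy data. The only difference is that you spell out explicitly the matrix bookkeeping (and the branch-independence of the square root) that the paper compresses into the phrase ``incorporating the gauge factor into the Dirichlet and Neumann data,'' which is a welcome addition rather than a deviation.
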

\begin{proof}
Applying the gauge transform \begin{equation} \label{symm} v(x):=e^{\gamma_i x}u(x) \end{equation} and using \eqref{eq:skin_coupled}, it is easy to check that $v(x)$ satisfies
\begin{equation*}
\frac{{\dd}^{2} v}{\dd x^2}(x)  +(\omega^2-\gamma_i^2)v(x)=0,\quad  x\in D_i .
\end{equation*}
Hence, the propagation of $v$ is determined by $P^{\sqrt{\omega^2-\gamma_i^2}}_{i,int}$. The proof is then complete by incorporating the gauge factor into the Dirichlet and Neumann data.
\end{proof}
With this modified interior propagation matrix, we obtain the propagation matrix for the $i$\textsuperscript{th} block.
\begin{lemma}\label{lem:propagation}
    The $i$\textsuperscript{th} propagation matrix is given by 
    \begin{equation}
    \begin{aligned}
    P^\omega_{i,gauge} &= P^\omega_{i,ext}P_{ext\to int}^{-1}P^\omega_{i,int,gauge}P_{ext\to int} \\
    &=e^{-\ell_i\gamma_i}{\footnotesize\begin{pmatrix}
            \cos(\omega s_i)\cosh(\nu\ell_i)-\Psi(-\delta\gamma_i,\omega)\sinh(\nu\ell_i) &
            \frac{1}{\omega}\cosh(\nu\ell_i)\sin(\omega s_i) - \frac{\delta}{\omega}\Psi(-\omega,\frac{\gamma_i}{\delta})\sinh(\nu \ell_i) \\
            -\omega\cosh(\nu \ell_i)\sin(\omega s_i) -\frac{\omega}{\delta}\,\Psi(\omega,\delta\gamma_i)\sinh(\nu \ell_i) &
            \cos(\omega s_i)\cosh(\nu \ell_i)
            -\delta\Psi(\frac{\gamma_i}{\delta},\omega)\sinh(\nu \ell_i)
        \end{pmatrix}},
    \end{aligned}
    \end{equation}
where $\Psi(a,b) \coloneqq ({a\cos(\omega s_i)+b\sin(\omega s_i)})/{\nu}$ and $\nu \coloneqq \sqrt{\gamma_i^2-\omega^2}$. It satisfies the relation
\begin{equation*}
\begin{pmatrix}
        u(x_{i+1}^{\iL}) \big|_{\iL} \\
        u'(x_{i+1}^{\iL}) \big|_{\iL}
    \end{pmatrix} = P_{i,gauge}^\omega \begin{pmatrix}
        u(x_{i}^{\iL}) \big|_{\iL} \\
        u'(x_{i}^{\iL}) \big|_{\iL}
    \end{pmatrix},
\end{equation*}
for any $u(x)$ solution of \cref{eq:skin_coupled}. Furthermore, we have $\det(P^\omega_{i,gauge} ) = e^{-2\ell_i\gamma_i}$.
\end{lemma}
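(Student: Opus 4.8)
The plan is to prove the stated block formula by direct computation, taking the factorisation $P^\omega_{i,gauge} = P^\omega_{i,ext}P_{ext\to int}^{-1}P^\omega_{i,int,gauge}P_{ext\to int}$ as its definition and inserting the expression for $P^\omega_{i,int,gauge}$ supplied by the preceding lemma. The only genuinely analytic input is the conversion of the interior factor $P^{\sqrt{\omega^2-\gamma_i^2}}_{i,int}$ from trigonometric to hyperbolic form; everything after that is linear algebra of the same flavour as the derivation of \eqref{pomegai}.

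First I would set $\widetilde\omega := \sqrt{\omega^2-\gamma_i^2}$ and record that, since $\nu^2 = \gamma_i^2 - \omega^2 = -\widetilde\omega^2$, the identities $\cos(\widetilde\omega\ell_i) = \cosh(\nu\ell_i)$, $\tfrac{1}{\widetilde\omega}\sin(\widetilde\omega\ell_i) = \tfrac{1}{\nu}\sinh(\nu\ell_i)$ and $-\widetilde\omega\sin(\widetilde\omega\ell_i) = \nu\sinh(\nu\ell_i)$ hold independently of the branch chosen for the square roots (each entry is even in $\widetilde\omega$, or pairs an odd function with $\widetilde\omega^{\mp1}$). This turns $P^{\widetilde\omega}_{i,int}$ into the hyperbolic matrix $\begin{psmallmatrix}\cosh(\nu\ell_i) & \nu^{-1}\sinh(\nu\ell_i)\\ \nu\sinh(\nu\ell_i) & \cosh(\nu\ell_i)\end{psmallmatrix}$.

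Next I would carry out the two conjugations explicitly. Conjugating the hyperbolic matrix by the shear pair $\begin{psmallmatrix}1&0\\-\gamma_i&1\end{psmallmatrix}(\cdots)\begin{psmallmatrix}1&0\\\gamma_i&1\end{psmallmatrix}$ and using $\nu^2-\gamma_i^2 = -\omega^2$ to simplify the lower-left entry reduces $P^\omega_{i,int,gauge}$ to the compact form $e^{-\gamma_i\ell_i}\begin{psmallmatrix}\cosh + \tfrac{\gamma_i}{\nu}\sinh & \tfrac{1}{\nu}\sinh\\ -\tfrac{\omega^2}{\nu}\sinh & \cosh - \tfrac{\gamma_i}{\nu}\sinh\end{psmallmatrix}$ (abbreviating $\cosh = \cosh(\nu\ell_i)$, etc.). The outer conjugation by $P_{ext\to int} = \diag(1,\delta)$ then multiplies the $(1,2)$ entry by $\delta$ and the $(2,1)$ entry by $\delta^{-1}$, and a final left-multiplication by $P^\omega_{i,ext}$ produces the claimed matrix once the recurring combinations $a\cos(\omega s_i) + b\sin(\omega s_i)$ are grouped and read off as $\nu\,\Psi(a,b)$.

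Finally, the determinant and the propagation identity require essentially no computation. By multiplicativity $\det P^\omega_{i,gauge} = \det P^\omega_{i,ext}\cdot\det P^\omega_{i,int,gauge}$, since $\det(P_{ext\to int}^{-1})\det(P_{ext\to int})=1$; here $\det P^\omega_{i,ext} = 1$ and, in $P^\omega_{i,int,gauge}$, the two shears and $P^{\widetilde\omega}_{i,int}$ (with $\cosh^2-\sinh^2 = 1$) all have unit determinant while the scalar prefactor contributes $(e^{-\gamma_i\ell_i})^2$, yielding $e^{-2\gamma_i\ell_i}$. The relation between the boundary data of $u$ at $x_i^{\iL}$ and $x_{i+1}^{\iL}$ then follows by composing the interior and exterior propagation steps with the transmission conditions of \eqref{eq:skin_coupled}. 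The main (and only) obstacle is the error-prone bookkeeping of the $\delta$-powers and the trigonometric--hyperbolic collection into the $\Psi$ notation; I would organise this strictly entry by entry, and I note in passing that care is needed precisely because the placement of $\delta$ inside the arguments of $\Psi$ differs between the diagonal and the off-diagonal entries.
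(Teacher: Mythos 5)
Your method is the same one the paper relies on: the lemma is stated there with no separate proof, being understood as exactly the direct computation you describe, and your intermediate objects are all correct --- the branch-independent hyperbolic form of $P^{\sqrt{\omega^2-\gamma_i^2}}_{i,int}$, the compact expression
\[
P^\omega_{i,int,gauge}=e^{-\gamma_i\ell_i}\begin{pmatrix}\cosh(\nu\ell_i)+\frac{\gamma_i}{\nu}\sinh(\nu\ell_i) & \frac{1}{\nu}\sinh(\nu\ell_i)\\ -\frac{\omega^2}{\nu}\sinh(\nu\ell_i) & \cosh(\nu\ell_i)-\frac{\gamma_i}{\nu}\sinh(\nu\ell_i)\end{pmatrix}
\]
after the shear conjugation, the effect of conjugation by $P_{ext\to int}=\diag(1,\delta)$ on the off-diagonal entries, and the determinant and propagation-identity arguments.

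There is, however, one concrete point where your write-up stops short of the finish line: you assert that the final left-multiplication by $P^\omega_{i,ext}$ \enquote{produces the claimed matrix}, but carrying out that multiplication gives the $(1,1)$ entry
\[
\cos(\omega s_i)\cosh(\nu\ell_i)+\frac{\gamma_i\cos(\omega s_i)-\frac{\omega}{\delta}\sin(\omega s_i)}{\nu}\sinh(\nu\ell_i)
=\cos(\omega s_i)\cosh(\nu\ell_i)-\tfrac{1}{\delta}\Psi(-\delta\gamma_i,\omega)\sinh(\nu\ell_i),
\]
whereas the statement prints $\Psi(-\delta\gamma_i,\omega)$ without the factor $\tfrac{1}{\delta}$; the other three entries do come out exactly as printed. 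The printed $(1,1)$ entry is a typo, and your computation, pushed to the end, would expose it. You can certify your version in two independent ways: (i) with your entry, the symmetrised matrix $e^{\ell_i\gamma_i}P^\omega_{i,gauge}$ has determinant one (the $\cosh\sinh$ cross-terms cancel and the $\sinh^2$ terms collect to $(\omega^2-\gamma_i^2)/\nu^2=-1$), which is exactly what the lemma's own claim $\det P^\omega_{i,gauge}=e^{-2\ell_i\gamma_i}$ and multiplicativity over the factorisation force, while the printed entry violates it; (ii) in the limit $\gamma_i\to 0$, your entry reduces to $\cos(\omega s_i)\cos(\omega\ell_i)-\frac{1}{\delta}\sin(\omega s_i)\sin(\omega\ell_i)$, matching the $(1,1)$ entry of the reciprocal propagation matrix $P^\omega_i$ of \cref{psetting}, whereas the printed entry loses the $\frac{1}{\delta}$. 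So the $\delta$-bookkeeping you flag in your closing sentence is not merely error-prone: done honestly, it shows the lemma's $(1,1)$ entry must be corrected to $\Psi(-\gamma_i,\omega/\delta)=\frac{1}{\delta}\Psi(-\delta\gamma_i,\omega)$, and it is this corrected formula that your argument proves.
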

In the sequel,  we drop for simplicity the subscript `gauge' from $P_{i, gauge}^{\omega}$ as the gauge potential is always in place. To recast $P^\omega_{i}$ in a symmetric form, we factor out the decay and introduce the following \emph{symmetrised propagation matrix}: 
\begin{equation}\label{eq:symm_prop}
    \Tilde{P}_i^\omega \coloneqq e^{\ell_i \gamma_i}P_i^\omega \in \operatorname{SL}(2,\C).
\end{equation}

Now, we aim to study the resonant frequencies of \cref{eq:skin_coupled} and define analogously to \eqref{ptotal} the \emph{total propagation matrix} $P^\omega_{tot} \coloneqq P_{N}^\omega\cdots P_1^\omega$ and its symmetrisation:
\begin{align} \label{eq:symm_prop2}
\Tilde{P}^\omega_{tot}:= e^{\sum_{i=1}^N\ell_i\gamma_i} P^\omega_{tot}.
\end{align}
Here, we have again chosen $s_N=0$ in order to define $P_N^\omega$.
Analogously to \eqref{eq:characterisationExp}, by matching the radiation conditions \eqref{radiation}, we find that 
\begin{equation} \label{eq:characterisation}
    \omega \text{ is a resonant frequency of }\cref{eq:skin_coupled} \iff 
\det\left(P_{tot}^{\omega}
\begin{pmatrix}
    1 \\ -\i \omega
\end{pmatrix}\middle|
\begin{pmatrix}
    1 \\ \i \omega
\end{pmatrix}
\right) =0.
\end{equation}
Thanks to the symmetrised  $\Tilde{P}^\omega_{tot}$, we may also check the following (mathematically) equivalent characterisation:
{\begin{equation}\label{eq:ptilde_eva_criterion}
  \det\left(\Tilde{P}_{tot}^{\omega}
\begin{pmatrix}
    1 \\ -\i \omega
\end{pmatrix}\middle|
\begin{pmatrix}
    1 \\ \i \omega
\end{pmatrix}
\right) =0, 
\end{equation}
} 
which is significantly more stable numerically than \cref{eq:characterisation}. 
 

Note that in the subwavelength regime, the characterisation  \eqref{eq:characterisation} 
reduces to finding the eigenvalues of the (non-symmetric) gauge capacitance matrix $\mc C^\gamma$ that is given by
\begin{equation} \label{gaugecapacitance}
\mc C^\gamma :=  \begin{dcases}
            \frac{\zeta(\gamma_1\ell_1)}{s_1},                    & i=j=1,                 \\
             \frac{\zeta(-\gamma_i\ell_i)}{s_{i-1}}+\frac{\zeta(\gamma_i\ell_i)}{s_i},                & 1< i=j< N,             \\
            \frac{\zeta(-\gamma_N\ell_N)}{ s_{N-1}},       & i=j=N,                 \\
            -\frac{\zeta(\pm\gamma_i\ell_i)}{s_{\min(i,j)}} , & 1\leq i\pm 1=j \leq N, \\
            \ 0,                                                 & \text{else},
        \end{dcases}
\end{equation}
where \begin{equation*} \label{def:zeta}
\zeta(z)\coloneqq \frac{z}{1-e^{-z}}>0;
\end{equation*}
see \cite[Section 5]{pm1}.
Moreover, by factoring out the non-reciprocal decay as done in the characterisation \eqref{eq:ptilde_eva_criterion}, we can find a similar but symmetric
system to $\mc C^\gamma$; see \cite{ammari2025competingedgebulklocalisation}.

We also note that, analogously to \cref{thm:leadingorder}, a characterisation of higher resonant frequencies for non-reciprocal systems can be derived using the same approach as in \cite{pm2}. Nevertheless, we do not make use of it in the sequel.  

\subsection{Symmetrisation by gauge transformation} \label{sec:symgauge}

Intuitively, transitioning from \cref{eq:characterisation} to \cref{eq:ptilde_eva_criterion} in \cref{ssec:skin_propmat} reformulates the original non-reciprocal problem into a better-conditioned reciprocal one by factoring out the decay. In this section, we rigorously establish this from an operator perspective by transforming \eqref{eq: gen Sturm-Liouville} into a self-adjoint problem through a gauge transformation.

The appropriate Hilbert space to study the present Sturm-Liouville problems is $\hs$ with weight $\dd\mu = \frac{1}{\kappa(x)}\dd x$, yielding the inner product
\begin{equation}
    \ip{f}{g}_{\hs} \coloneqq \int_{-\infty}^\infty f(x) \overline{g(x)} \frac{1}{\kappa(x)} \dd x.
\end{equation}
The problem \eqref{eq: gen Sturm-Liouville} then corresponds to the spectral problem $\mc L u = \omega^2 u$ of the non-Hermitian operator $\mathcal{L}: \dom(\mathcal{L}) \subset \hs \to \hs$ that is defined by
\begin{equation}
    \mathcal{L} u \coloneqq -\kappa(x) \left[ \frac{\dd}{\dd x}\left( \frac{1}{\rho(x)}\frac{\dd u}{\dd x} \right) + 2\frac{\gamma(x)}{\rho(x)} \frac{\dd u}{\dd x} \right].
\end{equation}
The domain $\dom(\mathcal{L})$ consists of functions $u \in \hs$ such that $u$ and $\frac{1}{\rho}u'$ are locally absolutely continuous, ensuring that the transmission conditions in \eqref{eq:skin_coupled} are well-defined, and that $\mathcal{L}u \in \hs$. 

The crucial step in the symmetrisation of $\mc L$ is to introduce the \emph{gauge transformation} $T: \hs \to \hs$ as the multiplication operator:
\begin{equation} \label{def:gaugetrnas}
    T [v](x) = e^{-\Gamma(x)} v(x)\,, \quad \text{where}\quad \Gamma(x) \coloneqq \int_0^x\gamma(x')\dd x'\,.
\end{equation}

\begin{proposition}[Similarity]\label{prop:gauge_similarity}
   The \emph{gauge transform} $T$ in \eqref{def:gaugetrnas} is a bounded linear isomorphism with a bounded inverse. Furthermore, the operator $\mathcal{L}$ is similar to the operator $\mc S$, that is, $\mathcal{L} = T\mc S T^{-1}$, where $\mc S$ is defined by
    \begin{equation}
        \label{eq:S_def}
        \mc S v \coloneqq -\kappa(x) \frac{\dd}{\dd x} \left( \frac{1}{\rho(x)} \frac{\dd v}{\dd x} \right) + V(x) v
    \end{equation}
    with the potential $V$ being defined by $V(x) := \kappa(x) \left[ \frac{\gamma(x)^2}{\rho(x)} - \frac{\dd}{\dd x}\left(\frac{\gamma(x)}{\rho(x)}\right) \right]$.
\end{proposition}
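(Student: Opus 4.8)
The statement splits into two parts: that $T$ is a bounded isomorphism with bounded inverse, and that $\mc L$ is similar to $\mc S$ through $T$. The plan is to settle the first part directly and then to regard $\mc S$ as the transformed operator $T^{-1}\mc L T$ on the domain $T^{-1}\dom(\mc L)$, so that similarity is automatic and the real content becomes the verification that this transformed operator acts as in \eqref{eq:S_def}. For boundedness, note that $T$ is multiplication by $e^{-\Gamma}$, so it suffices to bound $\Gamma$ in $L^\infty$. Since $\gamma$ is piecewise constant and supported in $\mc D\subset[0,L]$, its primitive $\Gamma(x)=\int_0^x\gamma$ is Lipschitz, vanishes on $(-\infty,0]$, and equals the constant $\sum_{i=1}^N\gamma_i\ell_i$ on $[L,\infty)$; hence $\|\Gamma\|_\infty<\infty$ and $e^{-\|\Gamma\|_\infty}\leq e^{-\Gamma(x)}\leq e^{\|\Gamma\|_\infty}$ uniformly. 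As the weight $\dd\mu=\kappa^{-1}\dd x$ defining $\hs$ is untouched by $T$, multiplication by $e^{-\Gamma}$ is bounded on $\hs$ with bounded two-sided inverse $T^{-1}$, namely multiplication by $e^{\Gamma}$; this part is routine.

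To identify $T^{-1}\mc L T$, the key algebraic fact is that conjugating the derivative by $e^{-\Gamma}$ shifts it by the gauge field: since $\Gamma'=\gamma$, one has $e^{\Gamma}\frac{\dd}{\dd x}(e^{-\Gamma}v)=v'-\gamma v$. Setting $u:=Tv=e^{-\Gamma}v$, I would compute $u'=e^{-\Gamma}(v'-\gamma v)$ and $\tfrac1\rho u'=e^{-\Gamma}\tfrac1\rho(v'-\gamma v)$, differentiate once more, and substitute into $\mc L u=-\kappa\big[(\tfrac1\rho u')'+2\tfrac\gamma\rho u'\big]$. The decisive step is that the two first-order contributions — the $-2\tfrac\gamma\rho v'$ produced when the exponential is differentiated and the $+2\tfrac\gamma\rho v'$ coming from the explicit drift term — cancel; this cancellation is the entire purpose of the gauge transform and mirrors the reduction \eqref{symm} used for the propagation matrix. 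Factoring out the common $e^{-\Gamma}$ then leaves $-\kappa(\tfrac1\rho v')'$ together with a multiplication term, which one reads off as the potential $V$ of \eqref{eq:S_def}, yielding $\mc L(e^{-\Gamma}v)=e^{-\Gamma}\mc S v$ on every interval on which the coefficients are constant.

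The main obstacle — the part demanding care rather than bookkeeping — is the behaviour at the endpoints $x_i^{\iL},x_i^{\iR}$, where $\gamma$, $\rho$ and $\kappa$ all jump. Two points must be reconciled. First, $V$ carries the term $-\kappa\frac{\dd}{\dd x}(\tfrac\gamma\rho)$, and since $\tfrac\gamma\rho$ is piecewise constant this is a sum of Dirac masses at the interfaces rather than a genuine $L^2$ potential. Second, I must identify $\dom(\mc S)=T^{-1}\dom(\mc L)$ concretely. The plan is to transport the transmission conditions defining $\dom(\mc L)$ — continuity of $u$ and of $\tfrac1\rho u'$, which is exactly what the $\delta$-weighted Neumann matchings in \eqref{eq:skin_coupled} encode — through $u=e^{-\Gamma}v$. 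As $e^{-\Gamma}$ is continuous and nonvanishing, continuity of $u$ is equivalent to continuity of $v$, whereas continuity of $\tfrac1\rho u'$ becomes continuity of $\tfrac1\rho(v'-\gamma v)$, i.e. the jump relation $\left[\tfrac1\rho v'\right]=\left[\tfrac\gamma\rho\right]v$ across each endpoint. One then verifies that this is precisely the condition under which the singular contributions generated by $-\kappa(\tfrac1\rho v')'$ and by the distributional part of $V$ cancel, so that $\mc S v\in\hs$ and the interior identity extends across the interfaces. Establishing this cancellation, and thereby the precise form of $\dom(\mc S)$, is the crux; everything else is the constant-coefficient computation carried out interval by interval.
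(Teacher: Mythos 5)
Your proposal is correct and follows essentially the same route as the paper: boundedness of $T^{\pm1}$ from boundedness of $\Gamma$, conjugation of the derivative via $e^{\Gamma}\frac{\dd}{\dd x}e^{-\Gamma}=\frac{\dd}{\dd x}-\gamma$, cancellation of the first-order terms, and a distributional reading of $\frac{\dd}{\dd x}(\gamma/\rho)$ at the interfaces; your additional bookkeeping of the domain transport (continuity of $v$ and of $\frac{1}{\rho}(v'-\gamma v)$, with the jump relation $\bigl[\frac{1}{\rho}v'\bigr]=\bigl[\frac{\gamma}{\rho}\bigr]v$ forcing the interface Dirac masses to cancel) goes beyond the paper's one-line remark and matches exactly the domain $\dom(\mc S)$ that the paper assigns to $\mc S$ in the subsequent self-adjointness proposition. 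One caveat: if you actually carry out the computation you outline, the multiplication term comes out as $\kappa\bigl[\frac{\gamma^2}{\rho}+\frac{\dd}{\dd x}\bigl(\frac{\gamma}{\rho}\bigr)\bigr]$, i.e.\ with a plus sign on the derivative term, and this is also the only sign for which your delta-cancellation step closes and for which the stated $\dom(\mc S)$ is the correct transported domain, so the minus sign in \eqref{eq:S_def} appears to be a typo in the statement --- a discrepancy that neither the paper's proof sketch nor your write-up surfaces, since neither writes the resulting potential out explicitly.
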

\begin{proof}
    Since $\Gamma$ is bounded, there exist constants $c, C > 0$ such that $c \leq |e^{\pm \Gamma(x)}| \leq C$. Thus, $T$ and $T^{-1}$ are bounded operators on $\hs$.
    We determine $\mc S$ by computing $T^{-1} \mathcal{L} T v$.  Using the identity
    \[
    e^{\Gamma} \left( \frac{\dd}{\dd x} \right) e^{-\Gamma} = \frac{\dd}{\dd x} - \gamma
    \]
    together with the product rule, a direct calculation shows that the first-order derivative terms cancel, yielding the Sturm-Liouville form in \cref{eq:S_def}. Note that the term $\frac{\dd}{\dd x}(\frac{\gamma}{\rho})$ in $V$ must be understood in the distributional sense, resulting in Dirac delta potentials at the interfaces where material parameters jump.
\end{proof}

\begin{proposition}[Self-adjointness]
    The operator $\mc S$ defined in \eqref{eq:S_def}, equipped with the domain $\dom(\mc S) = \{v \in \hs \mid v, \frac{1}{\rho}(v'-\gamma v) \in AC_{loc}(\R), S v \in \hs \}$, is self-adjoint on the Hilbert space $\hs$. 
    Here, $AC_{loc}(\R)$ denotes the set of functions that are absolutely continuous on any compact subset of $\R$.
\end{proposition}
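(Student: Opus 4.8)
The plan is to recognise $\mc S$ as a formally symmetric Sturm--Liouville expression written in \emph{quasi-derivative} form, and then to settle self-adjointness by Weyl's limit-point/limit-circle theory. First I would introduce the quasi-derivative $v^{[1]} := \tfrac{1}{\rho}(v'-\gamma v)$ and record that the same gauge computation underlying \cref{prop:gauge_similarity} rewrites the expression as
\[
\mc S v = -\kappa\big((v^{[1]})' + \gamma\, v^{[1]}\big).
\]
This is exactly the canonical symmetric second-order expression with leading coefficient $p=1/\rho$, gauge term $s=-\gamma$, vanishing zeroth-order potential, and weight $\dd\mu=\kappa^{-1}\dd x$. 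The point of this reformulation is that the given domain $\dom(\mc S)$ is precisely the associated \emph{maximal} operator domain: the requirement $v^{[1]}\in AC_{loc}(\R)$ is what encodes the transmission conditions of \eqref{eq:skin_coupled} at the resonator interfaces and absorbs the Dirac potentials coming from the distributional term $\tfrac{\dd}{\dd x}(\gamma/\rho)$ in $V$, so that within $\dom(\mc S)$ the expression $\mc S v$ is a genuine $\hs$-function.

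The second step is Green's identity. For $v,\phi\in\dom(\mc S)$ I would integrate by parts with respect to the weight $\kappa^{-1}$ to obtain
\[
\ip{\mc S v}{\phi}_{\hs} - \ip{v}{\mc S\phi}_{\hs} = \Big[\, v\,\overline{\phi^{[1]}} - v^{[1]}\,\overline{\phi}\,\Big]_{-\infty}^{+\infty}.
\]
The crucial observation is that the boundary form carries \emph{no} interface contributions: since both $v$ and $v^{[1]}$ lie in $AC_{loc}(\R)$, the bracket is continuous across every resonator edge, so the delta potentials produce no surviving jump terms. This reduces the symmetry question to the vanishing of the bracket at $\pm\infty$.

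Third, I would classify the endpoints. Outside the compact set $[0,L]$ one has $\gamma=0$ and constant $\rho,\kappa$, so $\mc S$ reduces there to the free operator $-\kappa\rho^{-1}\partial_x^2$ on each half-line, which is in the textbook limit-point case at $\pm\infty$ (for $z$ off the real axis exactly one solution of $(\mc S - z)u=0$ is square integrable near each infinite endpoint). By the standard characterisation, being limit-point at an endpoint is equivalent to the vanishing of the boundary form there for \emph{all} maximal-domain elements; this simultaneously yields the symmetry from Step~2 and the essential self-adjointness of the minimal operator. Since a Sturm--Liouville expression that is limit-point at both endpoints admits a unique self-adjoint extension, namely the maximal operator, and $\dom(\mc S)$ is exactly this maximal domain, I conclude that $\mc S$ is self-adjoint.

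I expect the main obstacle to be the careful bookkeeping at the interfaces: one must verify that the quasi-derivative formulation genuinely reproduces the transmission conditions in \eqref{eq:skin_coupled} and that the distributional part of $V$ is exactly cancelled inside $\dom(\mc S)$, so that no spurious interface terms survive in Green's identity. Once the operator is cast in quasi-derivative form this is routine, the limit-point analysis at $\pm\infty$ is classical, and the identification of $\dom(\mc S)$ with the maximal domain is immediate from the stated regularity requirements.
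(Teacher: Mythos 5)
Your proposal is correct, and it is in fact more complete than the paper's own proof. The paper follows the same skeleton --- Green's identity plus an appeal to standard Sturm--Liouville theory --- but it only \emph{verifies formal symmetry}: it integrates by parts, asserts that the boundary terms ``vanish for $L^2$ functions'', and invokes the claim that, $V$ being real-valued, formal symmetry suffices for self-adjointness. Both assertions are exactly the points your argument makes rigorous. Membership in $\hs$ does not by itself kill the boundary form at $\pm\infty$; what does is the Weyl limit-point property there, which you establish by observing that outside the compact set $[0,L]$ the expression reduces to a free constant-coefficient operator. Likewise, ``real potential plus formal symmetry implies self-adjointness'' is false in general (limit-circle endpoints yield nonzero deficiency indices and a maximal operator that is not even symmetric); it is precisely the limit-point classification at both ends that makes the maximal operator self-adjoint, which is your concluding step. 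Your quasi-derivative bookkeeping also serves a purpose the paper glosses over: the distributional part of $V$ (the interface Dirac terms) is not locally integrable, so the classical theory applies only after recasting the expression in quasi-derivative form, where the requirement $\frac{1}{\rho}(v'-\gamma v)\in AC_{loc}(\R)$ encodes the transmission conditions of \eqref{eq:skin_coupled} and the Lagrange bracket is continuous across every interface, so no spurious jump terms survive. (Incidentally, your form $\mc S v = -\kappa\big((v^{[1]})'+\gamma v^{[1]}\big)$ with $v^{[1]}=\frac{1}{\rho}(v'-\gamma v)$ corresponds to $V=\kappa\big[\gamma^2/\rho + \frac{\dd}{\dd x}(\gamma/\rho)\big]$; this is consistent with the domain stated in the proposition and with the gauge computation in \cref{prop:gauge_similarity}, and suggests the minus sign in the paper's formula for $V$ in \eqref{eq:S_def} is a typo --- immaterial to self-adjointness either way.) In short: your route supplies the deficiency-index/limit-point argument and the interface analysis that the paper delegates to the citation of Zettl; the paper's version buys brevity at the cost of leaving those steps implicit.
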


\begin{proof}
    Following standard Sturm-Liouville theory \cite[Chapter 3]{zettl2005SturmLiouville}, since the potential $V$ is real-valued, formal symmetry is sufficient to ensure self-adjointness of $\mc S$. Let $f, g \in \dom(S)$. We compute
    \begin{align*}
        \ip{\mc S f}{g}_{\hs} &= \int_{-\infty}^\infty \left[ -\kappa \frac{\dd}{\dd x} \left( \frac{1}{\rho} \frac{\dd f}{\dd x} \right) + V f \right] \overline{g} \, \frac{1}{\kappa} \dd x \\
        &= \int_{-\infty}^\infty \left[ - \frac{\dd}{\dd x} \left( \frac{1}{\rho} \frac{\dd f}{\dd x} \right) + \frac{V}{\kappa} f \right] \overline{g} \dd x.
    \end{align*}
    Integrating by parts (and noting that the boundary terms vanish for $L^2$ functions), we find that
    \begin{align*}
      \ip{\mc S f}{g}_{\hs}  &= \int_{-\infty}^\infty \frac{1}{\rho} \frac{\dd f}{\dd x} \frac{\dd \overline{g}}{\dd x} \dd x + \int_{-\infty}^\infty \frac{V}{\kappa} f \overline{g} \dd x \\
        &= \ip{f}{S g}_{\hs}.
    \end{align*}
    That is, $\mc S$ is symmetric.
\end{proof}

Moreover, the \emph{symmetrised total propagation matrix} $\Tilde{P}^\omega_{tot}$ from \cref{eq:symm_prop2} is precisely the propagation matrix of the symmetrised $\mc{S}$. To establish this, it suffices to relate $P^\omega_i$ from \cref{eq:symm_prop} to the operator $\mc{S}$, as discussed in the following lemma.

\begin{lemma}\label{lem:symm_propmat_is_propmat_of_symm}
    Let $[0,L]\subset U$ an open neighbourhood and  $v\in L^2(U, \dd \mu)$ be a solution to the resonance problem $(\mc S -\omega^2)v = 0$ on $U$.
    Then we have
    \begin{equation*}
        \begin{pmatrix}
        v(x_{i+1}^{\iL}\big|_{\iL})\\
        v'(x_{i+1}^{\iL})\big|_{\iL}
    \end{pmatrix} = \Tilde{P}_i^\omega \begin{pmatrix}
        v(x_{i}^{\iL})\big|_{\iL}\\
        v'(x_{i}^{\iL})\big|_{\iL}
    \end{pmatrix},
    \end{equation*}
    where $\Tilde{P}_i^\omega$ is the \emph{symmetrised propagation matrix} from \cref{eq:symm_prop}.
\end{lemma}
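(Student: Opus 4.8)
The plan is to use the similarity $\mc L = T\mc S T^{-1}$ established in \cref{prop:gauge_similarity} to transport the eigenproblem for $\mc S$ into the non-reciprocal problem \eqref{eq:skin_coupled}, whose block propagation matrix $P_i^\omega$ is already known from \cref{lem:propagation}, and then to bookkeep the single scalar gauge factor accumulated across the $i$\textsuperscript{th} block.

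First I would set $u \coloneqq T v = e^{-\Gamma} v$ on $U$. Since $v$ solves $(\mc S - \omega^2) v = 0$ and $\mc S = T^{-1} \mc L T$, the function $u$ solves $(\mc L - \omega^2) u = 0$ on $U$; unwinding the definition of $\mc L$, this is precisely the block-wise Helmholtz equation together with the transmission conditions of \eqref{eq:skin_coupled} on $[x_i^{\iL}, x_{i+1}^{\iL}]$. To see that the interface conditions match, note that $v' - \gamma v = e^{\Gamma} u'$, so the domain requirement that $\tfrac{1}{\rho}(v' - \gamma v)$ be locally absolutely continuous is equivalent --- as $\Gamma$ is continuous --- to $\tfrac{1}{\rho} u'$ being locally absolutely continuous, which encodes exactly the flux conditions $u'|_{\iR}(x_i^{\iL}) = \delta\, u'|_{\iL}(x_i^{\iL})$ and $\delta\, u'|_{\iR}(x_i^{\iR}) = u'|_{\iL}(x_i^{\iR})$, while continuity of $u$ follows from continuity of $v$ and $e^{\Gamma}$. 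Consequently \cref{lem:propagation} applies to $u$ and propagates its Cauchy data by $P_i^\omega$ across the block.

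Next I would translate Cauchy data between $u$ and $v$. Differentiating $v = e^{\Gamma} u$ yields $v' = e^{\Gamma}(\gamma u + u')$, so the data transform under the matrix $e^{\Gamma(x)} \begin{psmallmatrix} 1 & 0 \\ \gamma & 1 \end{psmallmatrix}$, with $\gamma$ the relevant one-sided value at $x$. The decisive observation is that both evaluation points $x_i^{\iL}$ and $x_{i+1}^{\iL}$ are approached from the left exterior, where $\gamma = 0$; there the shear degenerates to the identity and the change of variables reduces to the pure scalar $e^{\Gamma(x)}$. Hence at $x \in \{x_i^{\iL}, x_{i+1}^{\iL}\}$ we have $\big(v(x), v'(x)\big)^\top\big|_{\iL} = e^{\Gamma(x)} \big(u(x), u'(x)\big)^\top\big|_{\iL}$.

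Combining the two steps, the $v$-data at $x_{i+1}^{\iL}$ equals $e^{\Gamma(x_{i+1}^{\iL}) - \Gamma(x_i^{\iL})} P_i^\omega$ applied to the $v$-data at $x_i^{\iL}$. It then remains to compute the exponent: since $\gamma \equiv \gamma_i$ on the resonator $D_i$ (of length $\ell_i$) and $\gamma \equiv 0$ on the adjacent gap, $\Gamma(x_{i+1}^{\iL}) - \Gamma(x_i^{\iL}) = \int_{x_i^{\iL}}^{x_{i+1}^{\iL}} \gamma(x')\,\dd x' = \gamma_i \ell_i$. Thus the prefactor is $e^{\gamma_i \ell_i}$, and by the definition \eqref{eq:symm_prop} we have $e^{\gamma_i \ell_i} P_i^\omega = \Tilde{P}_i^\omega$, which is the claim. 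The only point demanding care is this gauge bookkeeping: recognising that the shear in the change of variables drops out precisely because the data are read off at the left endpoints from the exterior where $\gamma$ vanishes, so that exactly the block gauge mass $\gamma_i \ell_i$ survives in the scalar factor.
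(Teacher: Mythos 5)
Your proposal is correct and follows essentially the same route as the paper's proof: gauge-transform $v$ to $u = e^{-\Gamma}v$ via \cref{prop:gauge_similarity}, propagate $u$ with $P_i^\omega$ using \cref{lem:propagation}, observe that at the evaluation points (taken from the exterior, where $\gamma=0$) the Cauchy data of $u$ and $v$ differ only by the scalar $e^{-\Gamma}$, and collect the factor $e^{\Gamma(x_{i+1}^{\iL})-\Gamma(x_i^{\iL})} = e^{\gamma_i\ell_i}$. Your extra details --- the explicit shear matrix $\begin{psmallmatrix}1 & 0\\ \gamma & 1\end{psmallmatrix}$ in the change of variables and the verification that the transmission conditions match under the gauge transform --- are consistent elaborations of steps the paper leaves implicit.
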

\begin{proof}
    Let $v\in L^2(U, \dd \mu)$ with $(\mc S -\omega^2)v = 0$. 
    By \cref{prop:gauge_similarity} this implies that the transformed $u = Tv = e^{-\Gamma}v$ satisfies $(\mc L -\omega^2)v = 0$.  We drop the subscript $|_{\iL}$ for simplicity. By \cref{lem:propagation}, we thus have
    \[
        \begin{pmatrix}
            u(x_{i+1}^{\iL})\big|_{\iL}  \\
            u'(x_{i+1}^{\iL}) \big|_{\iL}
        \end{pmatrix} = P_i^\omega \begin{pmatrix}
            u(x_{i}^{\iL}) \big|_{\iL} \\
            u'(x_{i}^{\iL}) \big|_{\iL}
        \end{pmatrix}.
    \]
Moreover, since the limits $u'(x_{i}^{\iL}) \big|_{\iL}$ and $v'(x_{i}^{\iL}) \big|_{\iL}$ are taken from the exterior of the resonators, where $\frac{\dd \Gamma}{\dd x}(x) = 0$, we obtain the following relation:
\[
        \begin{pmatrix}
            u(x_{i}^{\iL}) \big|_{\iL} \\
            u'(x_{i}^{\iL}) \big|_{\iL}
        \end{pmatrix}= e^{-\Gamma(x_i^{\iL})}\begin{pmatrix}
            v(x_{i}^{\iL}) \big|_{\iL} \\
            v'(x_{i}^{\iL}) \big|_{\iL}
        \end{pmatrix}.
    \]
    Combining this with the above propagation equation now yields
    \[
        \begin{pmatrix}
            v(x_{i+1}^{\iL}) \big|_{\iL} \\
            v'(x_{i+1}^{\iL}) \big|_{\iL}
        \end{pmatrix} = e^{\Gamma(x_{i+1}^{\iL})-\Gamma(x_i^{\iL})}P_i^\omega \begin{pmatrix}
            v(x_{i}^{\iL}) \big|_{\iL} \\
            v'(x_{i}^{\iL}) \big|_{\iL}
        \end{pmatrix},
    \]
    and the result follows from the fact that $e^{\Gamma(x_{i+1}^{\iL})-\Gamma(x_i^{\iL})}$ = $e^{\ell_i\gamma_i}$.
\end{proof}

\subsection{Non-Hermitian skin effect}
The results from the previous subsection will now enable us to prove the existence of the \emph{non-Hermitian skin effect} for finite periodic arrays, \emph{i.e.}, the exponential localisation of eigenmodes with resonant frequency in the band. Intuitively, the Floquet-Bloch theory guarantees that the resonant modes of $\mc S$ grow at most polynomially if $\omega$ approaches the band of the corresponding infinite periodic system. This polynomial growth is overpowered by the exponential localisation induced by the gauge transformation $T$, leading to exponential edge-localisation.

We consider a unit cell composed of $N$ resonators $\mc D$ supported on the interval $[0,L]$ and assume without loss of generality that $x_1^{\iL}=0$ and $x_N^{\iR}+s_N = L$, ensuring that there is a nonzero spacing $s_N$ between the final resonator and the right unit cell edge. This determines the material parameters $\rho(x), \kappa(x),$ and $ \gamma(x)$ which are constant outside of $[0,L]$.

We construct finite periodic arrays by replicating the unit cell $M$ times to the right, yielding the modified material parameters:  
\[
\rho_M(x), \kappa_M(x), \gamma_M(x) := \rho(x - mL), \kappa(x - mL), \gamma(x - mL),
\]
for $x \in [mL, (m+1)L]$ and $0 \leq m < M$, with the parameters extended as constants outside $[0, ML]$. Taking the limit as $M \to \infty$ and allowing $m \in \Z$, we define the infinite periodic limit as follows: for $x \in [mL, (m+1)L]$ and $m \in \Z$,  
\[
\rho_\infty(x):= \rho(x - mL), \quad \kappa_\infty(x) := \kappa(x - mL), \quad \gamma_\infty(x) := \gamma(x - mL).
\]
From the discussion in \cref{sec:symgauge}, we can construct the associated self-adjoint operators $\mc{S}_M$ and $\mc{S}_\infty$, where we additionally equip $\mc S_M$ with the outgoing radiation condition
\[
\frac{\dd v}{\dd \abs{x}}(x) -\i \frac{\omega}{v} v(x) = 0, \quad  x\in \R \setminus [0,ML].
\]

We define the \emph{cell propagation matrix} as the product of the symmetrised propagation matrices of its constituent resonators
\[
    \Tilde{P}(\omega) \coloneqq \Tilde{P}^\omega_N\cdots\Tilde{P}^\omega_1.
\]
Note that in this case there exists a nonzero final spacing $s_N>0$, making $\Tilde{P}^\omega_N$ well-defined.
The eigenvalues of $\Tilde{P}(\omega)\in \SL(2,\C)$ are $e^{\pm\i k(\omega)}$ with the convention $\Re k(\omega)\geq 0$. $k(\omega)$ is called the \emph{quasi-momentum} and controls the spatial behaviour of resonant solutions. 

We recall the following two results from the Floquet-Bloch theory of periodic operators; see \cite{eastham1973spectral}.
\begin{lemma}\label{lem:inspec_iff_inband} Let $\sigma(\mc S_\infty)$ denote the spectrum of $\mc S_\infty$. The following equivalence holds:
    \[
        \omega\in \sigma(\mc S_\infty) \iff k(\omega) \in \R.
    \]
\end{lemma}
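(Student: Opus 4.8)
The plan is to combine the Floquet--Bloch decomposition of the self-adjoint periodic operator $\mc S_\infty$ with the elementary spectral theory of the cell propagation matrix $\Tilde P(\omega)\in\SL(2,\C)$, whose eigenvalues are $e^{\pm\i k(\omega)}$. Since $\mc S_\infty$ has $L$-periodic coefficients and is self-adjoint (by the preceding two propositions), it admits a direct-integral decomposition over the fiber operators $\mc S_\theta$ acting on a single cell $[0,L]$ with quasi-periodic boundary conditions $v(x+L)=e^{\i\theta}v(x)$, $\theta\in[0,2\pi)$; consequently the spectral parameter $\omega^2$ lies in $\sigma(\mc S_\infty)$ if and only if $\omega^2\in\sigma(\mc S_\theta)$ for some $\theta$ (see \cite{eastham1973spectral}). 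First I would make this reduction precise, taking care that the transmission conditions and the distributional $\delta$-potentials at the material interfaces are absorbed into the domain of each $\mc S_\theta$, so that $\mc S_\theta$ is a regular self-adjoint problem on the cell with purely discrete spectrum.

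Next I would identify $\sigma(\mc S_\theta)$ with the monodromy matrix. A value $\omega^2$ is an eigenvalue of $\mc S_\theta$ exactly when $(\mc S-\omega^2)v=0$ admits a nontrivial solution on $[0,L]$ whose Cauchy data $(v,v')$ satisfy $v(x+L)=e^{\i\theta}v(x)$; by \cref{lem:symm_propmat_is_propmat_of_symm}, the propagation of these Cauchy data across one cell is governed precisely by $\Tilde P(\omega)$, so such a solution exists if and only if $e^{\i\theta}$ is an eigenvalue of $\Tilde P(\omega)$. Letting $\theta$ range over $[0,2\pi)$, I conclude that $\omega^2\in\sigma(\mc S_\infty)$ if and only if $\Tilde P(\omega)$ has an eigenvalue on the unit circle.

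The final step is the $\SL(2,\C)$ bookkeeping. Since $\det\Tilde P(\omega)=1$, its two eigenvalues are $\mu$ and $\mu^{-1}$, written as $e^{\pm\i k(\omega)}$ by definition of the quasi-momentum; in particular if one eigenvalue is unimodular then so is its reciprocal. An eigenvalue lies on the unit circle if and only if $\abs{e^{\i k(\omega)}}=1$, which holds if and only if $k(\omega)\in\R$ (consistently with the convention $\Re k(\omega)\ge 0$). Chaining the three equivalences yields $\omega\in\sigma(\mc S_\infty)\iff k(\omega)\in\R$, where $\omega$ is understood via the spectral parameter $\omega^2$.

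I expect the main obstacle to be the rigorous justification of the direct-integral/fiber decomposition in the present setting rather than the algebra: one must confirm that the Floquet--Bloch machinery of \cite{eastham1973spectral} applies despite the weighted space $\hs$, the transmission conditions, and the Dirac-delta terms in the potential $V$, and that the fiber parameter $e^{\i\theta}$ matches the monodromy eigenvalue without a spurious boundary contribution. A clean way to dispatch this is to observe that $\mc S$ is unitarily equivalent to a standard Sturm--Liouville operator with a real potential that is locally integrable away from finitely many points plus a periodic point measure, for which the band characterization via the discriminant $\operatorname{tr}\Tilde P(\omega)\in[-2,2]$ is classical; the identity $\operatorname{tr}\Tilde P(\omega)=2\cos k(\omega)$ then reproduces the stated equivalence.
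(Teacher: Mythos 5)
Your proposal is correct and is essentially the paper's own approach: the paper does not prove this lemma but recalls it directly from the classical Floquet--Bloch theory of periodic Sturm--Liouville operators \cite{eastham1973spectral}, and your fiber-decomposition/monodromy argument --- reducing $\sigma(\mc S_\infty)$ to the fiber spectra, identifying eigenvalues of the fibers with unimodular eigenvalues of the cell propagation matrix $\Tilde{P}(\omega)\in\SL(2,\C)$, and concluding via $\operatorname{tr}\Tilde{P}(\omega)=2\cos k(\omega)$ --- is precisely the standard proof found there. The technical points you flag (weighted space, transmission conditions, distributional potential) are exactly what the paper's citation implicitly absorbs, and they pose no obstruction since all coefficients are periodic and the problem is a regular Sturm--Liouville system away from finitely many interface points per cell.
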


\begin{lemma} \label{thm:bandstructure}
The spectrum $\sigma(\mc S_\infty)$ is absolutely continuous and can be decomposed into \emph{bands}
    \[
        \sigma(\mc S_\infty) = \bigcup_{j\in \Z}[a_j,b_j] \subset \R
    \]
    with band edges given by $\dots<a_j < b_j <a_{j+1}<b_{j+1}<\dots$.
\end{lemma}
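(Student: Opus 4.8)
The plan is to recognise $\mc S_\infty$ as a periodic self-adjoint Sturm--Liouville operator and to invoke the classical Hill-discriminant / Floquet--Bloch machinery, organised around the trace of the cell propagation matrix. Set $\Delta(\omega) := \tr \Tilde{P}(\omega)$, the \emph{discriminant}. Since each symmetrised block matrix $\Tilde P_i^\omega$ has entries that are entire in $\omega$ (trigonometric and hyperbolic functions of $\omega$) and is real for real $\omega$ under the standing assumption that $\rho,\kappa,\gamma$ are real, the product $\Tilde P(\omega)$ is an entire $\SL(2,\R)$-valued function on the real axis, so $\Delta$ is real-analytic and real-valued there. The eigenvalues $e^{\pm\i k(\omega)}$ of $\Tilde P(\omega)$ satisfy $\Delta(\omega)=2\cos k(\omega)$, whence by \cref{lem:inspec_iff_inband}
\[
    \omega\in\sigma(\mc S_\infty)\iff k(\omega)\in\R\iff \Delta(\omega)\in[-2,2].
\]

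First I would make the Floquet--Bloch decomposition explicit, writing $\mc S_\infty=\int^\oplus_{[0,2\pi)}\mc S_\theta\,\dd\theta$, where $\mc S_\theta$ is the restriction to a single cell with the quasi-periodic transmission condition $v(x+L)=e^{\i\theta}v(x)$. Each fiber $\mc S_\theta$ is a regular self-adjoint Sturm--Liouville operator on a bounded interval (with the distributional Dirac-delta terms in $V$ encoded as self-adjoint point interactions), hence has compact resolvent and a discrete spectrum; ordering the non-negative frequencies solving $\mc S_\theta v=\omega^2 v$ gives continuous band functions $0\leq\omega_1(\theta)\leq\omega_2(\theta)\leq\cdots$, and $\sigma(\mc S_\infty)=\bigcup_j\omega_j([0,2\pi))$. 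Because $\Delta$ is non-constant, each band $[a_j,b_j]:=\omega_j([0,2\pi))$ is a genuine interval, and its edges are attained at $\theta\in\{0,\pi\}$, i.e.\ they are exactly the periodic ($\Delta=2$) and anti-periodic ($\Delta=-2$) eigenfrequencies.

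The two substantive claims then follow from one-dimensional oscillation theory. For absolute continuity I would show that each band function is strictly monotone on the open quasi-momentum interval $(0,\pi)$, equivalently $\Delta'(\omega)\neq0$ strictly inside each band, a standard consequence of the simplicity of periodic/anti-periodic eigenvalues away from band edges. A non-constant real-analytic band function forces the associated fiber spectral measure to be purely absolutely continuous, ruling out both eigenvalues and singular continuous spectrum. For the ordering of the band edges I would invoke Sturm oscillation theory for the periodic and anti-periodic problems, whose eigenfrequencies interlace in the classical pattern
\[
    \omega_0^+ < \omega_1^- \leq \omega_2^- < \omega_1^+ \leq \omega_2^+ < \omega_3^- \leq \omega_4^- < \cdots
\]
(periodic superscripted $+$, anti-periodic $-$), so that the bands are the stability intervals $[\omega_0^+,\omega_1^-]$, $[\omega_2^-,\omega_1^+]$, $\dots$; relabelling the successive edges as $a_j,b_j$ yields the asserted order, with $a_j<b_j$ encoding non-degeneracy of the bands and $b_j\leq a_{j+1}$ the spectral gaps.

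The main obstacle is not the classical skeleton, which is fully contained in \cite{eastham1973spectral}, but the verification that it survives the Dirac-delta potentials appearing in $V$ at the material interfaces. These must be read as self-adjoint jump (transmission) conditions for $\frac{1}{\rho}(v'-\gamma v)$; I would check that with these conditions the fiber operators remain self-adjoint with compact resolvent, that the monodromy matrix coincides with the explicitly entire matrix $\Tilde P(\omega)$ of \cref{lem:symm_propmat_is_propmat_of_symm} (so analyticity of $\Delta$ is automatic and the $\SL(2,\R)$-structure is preserved), and that the oscillation count is unaffected by finitely many point interactions per cell. A secondary point is the strict inequality $b_j<a_{j+1}$: closed gaps can occur non-generically, so I would either weaken the statement to $b_j\leq a_{j+1}$ or observe that the gaps remain open for the generic parameter choices relevant to the subsequent skin-effect analysis.
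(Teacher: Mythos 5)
Your proposal is correct, and it is essentially the proof the paper relies on: the paper does not prove this lemma at all, but simply recalls it as a classical result of Floquet--Bloch theory with a citation to Eastham's book, which is precisely the Hill-discriminant machinery ($\Delta(\omega)=\tr\Tilde{P}(\omega)$, fiber decomposition, oscillation theory) that you reconstruct. Your two flagged subtleties are both genuine and are exactly the points the paper glosses over by citing: the Dirac-delta terms in $V$ are indeed encoded as transmission conditions on $\frac{1}{\rho}(v'-\gamma v)$ via the domain $\dom(\mc S)$, so the monodromy matrix is the entire matrix $\Tilde{P}(\omega)$ and the classical theory survives; and the strict inequality $b_j<a_{j+1}$ is, as you observe, not unconditional (gaps can close, e.g.\ for constant coefficients), so the paper's statement is slightly stronger than what the cited theory delivers in general --- your suggestion to weaken it to $b_j\leq a_{j+1}$, or to note that the subsequent skin-effect analysis only uses compact sets away from band edges within a fixed open band $(a_j,b_j)$, is the right repair.
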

A consequence of Lemma \ref{lem:inspec_iff_inband} and Lemma \ref{thm:bandstructure} is that $\omega \in \C \setminus \R$ implies $\Im k(\omega) \neq 0$, and therefore $\Tilde{P}(\omega)$ must have two eigenvalues of distinct magnitudes. 

The spatial growth or decay of resonant solutions is controlled by the imaginary part of the quasimomentum $k(\omega)$. We therefore now proceed to characterise its asymptotic behaviour as $M\to \infty$.
\begin{lemma}
    Consider a sequence of resonant frequencies $\omega_M$ of $\mc S_M$ such that 
    $\omega_M\to \omega^*\in (a_j,b_j)$ for some band $[a_j,b_j]$.
 Then, there exist constants $C_1>0$ and $r\in \Z_{\geq 0}$ such that 
    \[
        \abs{\Im \omega_M}, \abs{\Im k(\omega_M)}\leq \frac{C_1}{M}+r\frac{\ln M}{M}.
    \]
    Moreover, we have $r=0$ for almost every $\omega^*\in (a_j,b_j)$.
\end{lemma}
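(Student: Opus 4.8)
The plan is to reduce the resonance condition for the truncated array $\mc S_M$ to an explicit \emph{quantisation equation} for the quasi-momentum and to read $\Im k$ off directly from it. Since $\mc S_M$ consists of $M$ identical copies of the unit cell, \cref{lem:symm_propmat_is_propmat_of_symm} together with the radiation characterisation \eqref{eq:ptilde_eva_criterion} shows that the symmetrised total propagation matrix is $\Tilde P(\omega)^M$, so $\omega_M$ is a resonance of $\mc S_M$ if and only if
\[
\det\left(\Tilde P(\omega)^M\begin{pmatrix}1\\-\i\omega\end{pmatrix}\,\middle|\,\begin{pmatrix}1\\ \i\omega\end{pmatrix}\right)=0.
\]
For $\omega$ near $\omega^*$ in the band interior the two eigenvalues $e^{\pm\i k(\omega)}$ of $\Tilde P(\omega)$ are distinct (as $k(\omega^*)\notin\pi\Z$), so I would diagonalise $\Tilde P(\omega)$, expand the radiation vectors $(1,\mp\i\omega)^\top$ in the eigenbasis, and use that $\Tilde P^M$ has eigenvalues $e^{\pm\i Mk}$. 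A short computation turns the determinant condition into
\[
e^{2\i Mk(\omega)}=R(\omega),
\]
where $R(\omega)$ is the ratio of the eigenbasis coefficients; it is a meromorphic function of $\omega$ on the band interior, and an explicit calculation gives $|R(\omega)|=\bigl|(Q+2B\omega\sin k)/(Q-2B\omega\sin k)\bigr|$ with $Q,B$ built from the entries of $\Tilde P(\omega)$, showing that $R$ is finite and nonzero away from a discrete set.

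Taking moduli in the quantisation equation yields the exact identity
\[
\Im k(\omega_M)=-\frac{1}{2M}\ln|R(\omega_M)|,
\]
which is the engine of the whole estimate. First I would treat the generic case. For every $\omega^*$ that is neither a zero nor a pole of $R$ and at which $k'(\omega^*)\neq0$ — which by \cref{lem:inspec_iff_inband} and \cref{thm:bandstructure} is all but a discrete, hence null, set of $\omega^*\in(a_j,b_j)$ — the quantity $|R(\omega_M)|$ stays bounded above and below near $\omega^*$, so $|\Im k(\omega_M)|\leq C_1/M$ and $r=0$. Since $k$ is analytic near the real point $\omega^*$ with real, nonzero derivative $\kappa:=k'(\omega^*)$, one has $\Im k(\omega_M)=\kappa\,\Im\omega_M+O(|\omega_M-\omega^*|^2)$, and inverting this relation transfers the bound to $|\Im\omega_M|\leq C_1'/M$. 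This establishes the ``almost every'' statement.

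The remaining, delicate case is when $R$ has a zero or pole of order $r\geq1$ at $\omega^*$, so that $|R(\omega)|=|\omega-\omega^*|^{\pm r}|h(\omega)|$ with $|h|$ bounded above and below near $\omega^*$. Then the exact identity gives $|\Im k(\omega_M)|=\tfrac{r}{2M}\ln\tfrac{1}{|\omega_M-\omega^*|}+O(1/M)$, so to obtain an upper bound of the claimed form I must produce a \emph{lower} bound $|\omega_M-\omega^*|\geq c/M$. This is the main obstacle, and I would obtain it self-consistently: combining the identity with $\Im k(\omega_M)=\kappa\,\Im\omega_M(1+o(1))$ and the elementary inequality $|\omega_M-\omega^*|\geq|\Im\omega_M|$ (valid since $\omega^*\in\R$) yields $|\omega_M-\omega^*|\gtrsim \tfrac{1}{M}\ln\tfrac{1}{|\omega_M-\omega^*|}$. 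Were $|\omega_M-\omega^*|\leq c/M$ for arbitrarily small fixed $c$, then $\ln\tfrac1{|\omega_M-\omega^*|}\geq\ln(M/c)$ would force the right-hand side to be $\gtrsim \tfrac{\ln M}{M}$, which exceeds $c/M$ for large $M$ — a contradiction. Hence $|\omega_M-\omega^*|\geq c/M$, so $\ln\tfrac1{|\omega_M-\omega^*|}\leq\ln M+O(1)$ and therefore $|\Im k(\omega_M)|\leq r\tfrac{\ln M}{M}+\tfrac{C_1}{M}$; the same bound for $|\Im\omega_M|$ follows once more from $\Im k=\kappa\,\Im\omega+O(|\omega-\omega^*|^2)$. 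Beyond this self-consistent lower bound, the points I expect to require the most care are verifying that $R\not\equiv0,\infty$ and that its zero/pole set is genuinely discrete (so that the exceptional set is null), and handling the finitely many interior critical points where $\kappa=0$, at which the local relation between $\Im k$ and $\Im\omega$ degenerates and the comparison must be carried out with the appropriate fractional exponent.
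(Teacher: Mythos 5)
Your proposal is correct and follows essentially the same route as the paper's proof: both diagonalise the cell propagation matrix near $\omega^*$ with analytic $B(\omega)$, reduce the resonance condition to $c_1(\omega)e^{\i Mk(\omega)}=c_2(\omega)e^{-\i Mk(\omega)}$ (your quantisation equation $e^{2\i Mk}=R$ is exactly this with $R=c_2/c_1$), compare $\abs{\Im k}$ with $\abs{\Im\omega}$ via analyticity and realness of $k$ on the band interior, and split into the generic case ($R$ bounded away from $0$ and $\infty$, giving $r=0$ off a discrete set) and the degenerate case (a zero or pole of finite order $r$ at $\omega^*$, producing the $r\ln M/M$ correction). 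Your modulus identity $\Im k(\omega_M)=-\tfrac{1}{2M}\ln\abs{R(\omega_M)}$ is the exact-equality form of the paper's reverse-triangle estimate on $c_1e^{\i kM}-c_2e^{-\i kM}$, and your self-consistent lower bound $\abs{\omega_M-\omega^*}\geq c/M$ is a repackaging of the paper's dichotomy (if $\abs{\omega_M-\omega^*}\leq 1/M$ the claim is immediate, otherwise $\abs{c_1(\omega_M)}\gtrsim M^{-r}$), so the two arguments differ only in bookkeeping.
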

\begin{proof}
Let $K$ be a compact neighbourhood of $\omega^*$ that does not contain any band edges. We can choose $K$ so that $k(\omega)$ is analytic with $\frac{\dd k}{\dd \omega}(\omega)$ being nonzero on $K$. Moreover, for $\omega\in K \cap\R$, both $k(\omega)$ and its derivative are real.

Writing $\omega=x+\i y$ we may use the analytic expansion of $k(\omega)$ on $K$ to find
\[
\Im k(x+\i y) = yH(x,y)
\]
where $H(x,y)$ is a real analytic function for real $x$ and $y$. Moreover, using l'Hôpital's rule we find that $H(\omega^*, 0) = \frac{\dd k}{\dd \omega}(\omega^*) \neq 0$. After possibly shrinking $K$, we can thus bound the analytic $H(x,y)$ from below and above by $B_1, B_2>0$. Because $y = \Im \omega$ this yields the asymptotic equivalence of imaginary parts
\[
    B_1\abs{\Im \omega}\leq\abs{\Im k(\omega)} \leq B_2\abs{\Im \omega},
\]
allowing us to characterise their convergence interchangeably.
In the following, without loss of generality, we assume that $\Im \omega, \Im k(\omega)<0$.

Since $K$ is chosen to be away from the band edges and $\tp(\omega)$ has distinct eigenvalues in $\C \setminus \R$, we know that $\tp(\omega)$ is diagonalisable on $K$, and we obtain that
    \[
        \tp(\omega) = B(\omega)\begin{pmatrix}
            e^{\i k(\omega)} & 0\\
            0 & e^{-\i k(\omega)}
        \end{pmatrix} B^{-1}(\omega),
    \]
    where the matrices $B(\omega)$ and $B^{-1}(\omega)$ are analytic in $\omega \in K$.

    From \cref{lem:symm_propmat_is_propmat_of_symm}, it follows that $\omega_M$ is a resonant frequency of $\mc S_M$ if and only if 
    $\tp^M(\omega)(1,-\i\omega)^\top = \alpha(1,\i\omega)^\top$ for some $\alpha\in \C$. This is equivalent to determining the zeros of 
    \begin{align*}
        f_M(\omega) & \coloneqq \det \left( \tp^M(\omega)\begin{pmatrix}
            1\\-\i\omega
        \end{pmatrix}\middle|\begin{pmatrix}
            1\\\i\omega
        \end{pmatrix}\right)\\
       & =\det B(\omega)\det \left( \begin{pmatrix}
            e^{\i k(\omega)M} & 0\\
            0 & e^{-\i k(\omega)M}
        \end{pmatrix} B^{-1}(\omega)\begin{pmatrix}
            1\\-\i\omega
        \end{pmatrix}\middle|B^{-1}(\omega)\begin{pmatrix}
            1\\\i\omega
        \end{pmatrix}\right).
    \end{align*}
    We now define 
    \[
        \bm v_\pm(\omega) \coloneqq B^{-1}(\omega)\begin{pmatrix}
            1\\\pm\i\omega
        \end{pmatrix},
    \]
    which are also analytic in $\omega$. 
    Writing $c_1(\omega) \coloneqq \seq{\bm v_-}{1}(\omega)\seq{\bm v_+}{2}(\omega)$ and $c_2(\omega) \coloneqq \seq{\bm v_-}{2}(\omega)\seq{\bm v_+}{1}(\omega)$, we thus find that $\omega_M$ is a resonant frequency of $\mc S_M$ if and only if it is a zero of the function 
    \[
        g_M(\omega) \coloneqq c_1(\omega)e^{\i k(\omega) M}-c_2(\omega)e^{-\i k(\omega) M}.
    \]
    Here, we have used the fact that $\det B(\omega)$ is nonzero on $K$.
    
    We can now distinguish two cases: Either $c_1(\omega^*), c_2(\omega^*)\neq 0$ or either 
    $c_1(\omega^*)=0$ or $c_2(\omega^*)=0$. The case $c_1(\omega^*)=c_2(\omega^*)=0$ can be excluded after noticing that 
    \[ 
        c_1(\omega^*) - c_2(\omega^*) = \det \left(B^{-1}(\omega^*) \begin{pmatrix}
            1 & 1\\
            -\i \omega^* & \i \omega^*
        \end{pmatrix}\right) \neq 0.
    \]
    Moreover, because  $c_1(\omega)$ and $c_2(\omega)$ are analytic functions, their zeros are isolated. Without loss of generality, after assuming that $c_2(\omega^*) \neq 0$, we may thus shrink the neighbourhood $K$ so that $c_2(\omega) \neq 0$ on $K$,  $c_1(\omega) \neq 0$ on $K\setminus \{\omega^*\}$ and $\omega^*$ is potentially a zero of $c_1(\omega)$ with finite multiplicity. We now proceed by treating these cases separately.
    
    \textbf{Case} $c_1(\omega^*)\neq 0$.\\
    In this case, $c_1$ and $c_2$ are bounded from above and below on $K$. Applying the reverse triangle inequality yields
    \begin{gather*}
        \abs{g_M(\omega)} > \abs{\; \abs{c_1(\omega)e^{\i k(\omega) M}} - \abs{c_2(\omega)e^{-\i k(\omega) M}} \;} > C_1 e^{-M\Im (k(\omega))} - C_2.
    \end{gather*} 
    Because, $\Im k(\omega) < 0$, $g_M(\omega) =0$ is thus only possible as $M\to \infty$ if $M\Im (k(\omega))$ remains bounded, proving the claim with $r=0$ in this case. Because the zeros of $c_1(\omega)$ are isolated, this is the generic case, i.e. $c_1(\omega^*) \neq 0$ holds almost surely.

    \textbf{Case} $c_1(\omega^*)= 0$.\\
    In this case, we may additionally assume that there exists a subsequence $\omega_N$ of $\omega_M$ such that
    \[
        \abs{\omega_N-\omega^*} > \frac{1}{N}.
    \]
    Otherwise, the claim follows immediately. Now, because $\omega^*$ is the only zero of $c_1(\omega)$ on $K$, with some finite multiplicity $r$ we find that
    \[
        \abs{c_1(\omega_N)} > \frac{C_1}{N^r}.
    \]
    Repeating the above arguments, we thus obtain that
    \begin{gather*}
        \abs{g_N(\omega_N)} > \abs{\abs{c_1(\omega_N)e^{\i k(\omega_N) N}} - \abs{c_2(\omega_N)e^{-\i k(\omega_N) N}}} > \frac{C_1}{N^r} e^{-N\Im (k(\omega_N))} - C_2.
    \end{gather*}
    Therefore, in order for $g_N(\omega_N)=0$ to be possible, $\abs{N\Im k(\omega_N)-r\ln N}$ must remain bounded, yielding the desired claim.
\end{proof}

The following result now enables us to prove the delocalisation of the resonant modes of $\mc S_M$.
\begin{theorem}\label{thm:deloc}
    Let $\omega_M\to \omega^*$ be defined as above with resonant modes $v_M\in L^2([0,ML], \dd\mu)$ such that $\norm{(v_M(0), v'_M\rvert_{\iL}(0))^\top}_2 = 1$. 
    Then, there exist constants $C_3>0$ and $r\in \Z_{\geq 0}$ such that
    \[
        (C_3M^r)^{-1} \leq \norm{(v_M(x), v'_M(x))^\top}_2 \leq C_3M^r \quad \text{for }x\in [0,ML].
    \]
    Moreover, we have $r=0$ for almost every $\omega^*\in (a_j,b_j)$.
\end{theorem}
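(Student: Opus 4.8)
The plan is to propagate the resonant mode $v_M$ cell by cell using the symmetrised cell propagation matrix $\Tilde{P}(\omega_M)$ and to exploit the structural fact that $\Tilde{P}(\omega)\in\SL(2,\C)$, so that its two eigenvalues $e^{\pm\i k(\omega)}$ are reciprocal. First I would reuse the diagonalisation $\Tilde{P}(\omega)=B(\omega)\diag(e^{\i k(\omega)},e^{-\i k(\omega)})B^{-1}(\omega)$ established in the preceding lemma on the compact neighbourhood $K$ of $\omega^*$, where $B,B^{-1}$ are analytic, hence bounded with determinant bounded away from zero, and $\omega_M\in K$ for all large $M$. By \cref{lem:symm_propmat_is_propmat_of_symm} and periodicity, the data at the $m$-th cell boundary satisfies $(v_M(mL),v_M'(mL))^\top=\Tilde{P}(\omega_M)^m\,w$, where $w=(v_M(0),v_M'\rvert_{\iL}(0))^\top$ has unit norm.

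Second, I would convert the bound $\abs{\Im k(\omega_M)}\le C_1/M+r\ln M/M$ from the previous lemma into operator-norm estimates for the powers. Since $\abs{e^{\pm\i m k}}=e^{\mp m\Im k}$ and $m\le M$, the diagonal factor has norm at most $e^{M\abs{\Im k(\omega_M)}}\le e^{C_1}M^r$; combined with the uniform bounds on $B$ and $B^{-1}$, this gives $\norm{\Tilde{P}(\omega_M)^m}\le C M^r$ for all $0\le m\le M$. Because $\det\Tilde{P}=1$, the identical argument applied to $\Tilde{P}(\omega_M)^{-m}$ yields $\norm{\Tilde{P}(\omega_M)^{-m}}\le C M^r$ as well. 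Using that for an invertible matrix the smallest singular value equals $\norm{(\cdot)^{-1}}^{-1}$, these two estimates sandwich the cell-boundary data: $(CM^r)^{-1}\le\norm{(v_M(mL),v_M'(mL))^\top}_2\le CM^r$.

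Third, I would upgrade the cell-boundary estimate to all $x\in[0,ML]$. Writing $x=mL+\xi$ with $\xi\in[0,L)$, the data at $x$ equals $Q(\xi,\omega_M)\,(v_M(mL),v_M'(mL))^\top$, where $Q(\xi,\omega)$ is the partial (symmetrised) transfer matrix from the left cell edge to the interior point $\xi$. As a finite product of interior/exterior propagation factors and the interface matrix $\diag(1,\delta)$, $Q$ depends continuously on $(\xi,\omega)\in[0,L]\times K$, is invertible there, and has determinant bounded away from zero; hence $\norm{Q}$ and $\norm{Q^{-1}}=\norm{Q}/\abs{\det Q}$ are bounded by a constant independent of $M$. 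Chaining the two multiplicative bounds produces the desired two-sided estimate with a single constant $C_3$, and the assertion $r=0$ for almost every $\omega^*$ is inherited verbatim from the preceding lemma.

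The routine-but-delicate point is the third step: one must verify that the within-cell propagator $Q(\xi,\omega)$ stays uniformly invertible (determinant bounded below) across the finitely many material interfaces, so that the \emph{lower} bound survives off the cell boundaries; since each constituent factor is invertible and $[0,L]\times K$ is compact, this reduces to continuity and nonvanishing of a determinant. The conceptually essential input, however, is the $\SL(2,\C)$ structure: it is precisely $\det\Tilde{P}=1$ — equivalently the reciprocity of the eigenvalues $e^{\pm\i k}$ after symmetrisation — that forces the growth and decay rates to coincide, thereby turning the one-sided spectral bound $\abs{\Im k(\omega_M)}=\mc O(\ln M/M)$ into a genuinely two-sided, polynomial-in-$M$ delocalisation estimate.
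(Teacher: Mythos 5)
Your proposal is correct and follows essentially the same route as the paper's proof: reduce the claim to two-sided singular-value bounds for the powers $\Tilde{P}(\omega_M)^n$, use the analytic diagonalisation of $\Tilde{P}$ on the compact neighbourhood $K$ together with the bound $\abs{\Im k(\omega_M)}\leq (C_1+r\ln M)/M$ from the preceding lemma to obtain the upper bound $CM^r$, and invoke $\det\Tilde{P}(\omega_M)^n=1$ to convert that upper bound into the matching lower bound. The only difference is that you explicitly verify the uniform invertibility of the within-cell propagator $Q(\xi,\omega)$ so as to pass from the cell boundaries $x=nL$ to all $x\in[0,ML]$ --- a point the paper's proof leaves implicit --- which is a welcome sharpening of detail but not a change of method.
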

\begin{proof}
    Absorbing constant factors into $C_3$, we equivalently prove that $(C_2M^r)^{-1} \leq \sigma_1 \leq \sigma_2 \leq C_2M^r$ where $\sigma_1,\sigma_2$ are the two singular values of $\tp^n(\omega_M)$ for some $0\leq n \leq M$. This is due to the fact that $\tp^n(\omega_M)$ propagates and therefore controls the magnitude of the vector 
    \[
        \begin{pmatrix}
            v_M(nL)\\ v_M'(nL)
        \end{pmatrix}= \tp^n(\omega)\begin{pmatrix}
            1\\ -\i \omega
        \end{pmatrix}.
    \]
    Furthermore, because $\sigma_1\sigma_2 = \det \tp^n(\omega)=1$, it suffices to find an upper bound for $\sigma_2$.

    We may assume without loss of generality that $\Im \omega_M<0$ and $k(\omega_M)<0$. We then write $k(\omega_M) = \alpha_M -\i \beta_M$ and have $0< \beta_M \leq (C_1 + r\ln M)/M$, by the previous lemma.

    We again use the diagonalisability of $\tp^n(\omega_M)$ to obtain
    \[
        \tp^n(\omega_M) = B(\omega_M)\begin{pmatrix}
            e^{n\beta_M}e^{\i n\alpha_M} & 0\\
            0 & e^{-n\beta_M}e^{-\i n\alpha_M}
        \end{pmatrix} B^{-1}(\omega_M),
    \] and hence we also have 
    \[
    \sigma_2 = \norm{\tp^n(\omega)} \leq \abs{e^{n\beta_M}}\norm{B(\omega)}\norm{B^{-1}(\omega)} \leq e^{n(C_1+r\ln M)/M}C_dC_e \leq C_2M^r.
    \]
    Here, we again used the fact that $\tp(\omega)$ is diagonalisable on $K$, making $B(\omega)$ and $B^{-1}(\omega)$ analytic, and absorbed all constants into $C_2>0$.
\end{proof}

\begin{figure}
    \centering
    \includegraphics[width=0.6\linewidth]{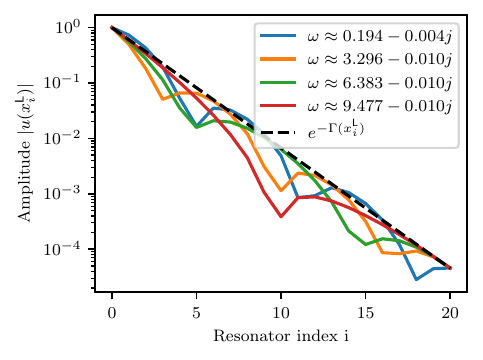}
    \caption{Resonant modes $u(x)$ corresponding to a variety of subwavelength and non-subwavelength resonant frequencies $\omega$ for a single-resonator unit cell ($N=1, \ell=s=1$) repeated periodically $M=20$ times. The dashed line denotes the characteristic decay $e^{-\Gamma(x)}$.}
    \label{fig:decay_comparison}
\end{figure}

Finally, we connect the delocalisation of the resonant modes $v_M(x)$ of $\mc S_M$ to the exponential localisation of the eigenmodes $u_M(x)$ of \cref{eq:skin_coupled} with the material parameters $\rho_M, \kappa_M,$ and $\gamma_M$. In particular, due to the fact that $v_M(x)$ can grow at most polynomially and $u_M(x) = e^{-\Gamma(x)}v_M(x)$, we have the following result. 
\begin{corollary}\label{cor:skin_decay}
    Let $\omega_M\to \omega^*$ as above and $u_M\in L^2([0,ML], \dd\mu)$ be the corresponding eigenmode of \cref{eq:skin_coupled} with $\norm{(u_M(0), u'_M\rvert_{\iL}(0))^\top}_2 = 1$. Assume further that $\gamma_i >0$ for all $i=1,\dots, N$. 
    
    In this case, $u_M$ is exponentially localised at the left edge. 
    More precisely, we have 
    \[
        \abs{u_M(x)} \leq C_3M^re^{-\Gamma(x)},
    \]
    where $C_3$ and $r$ are as in \cref{thm:deloc}.
\end{corollary}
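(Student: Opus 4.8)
The plan is to derive Corollary~\ref{cor:skin_decay} as a direct consequence of the delocalisation bound in \cref{thm:deloc}, using the explicit gauge relation $u_M = T v_M = e^{-\Gamma(x)}v_M(x)$ established in \cref{prop:gauge_similarity} and \cref{lem:symm_propmat_is_propmat_of_symm}. The key observation is that $\mc S_M$ and the operator $\mc L_M$ governing \cref{eq:skin_coupled} are related by the gauge similarity $\mc L_M = T \mc S_M T^{-1}$, so a resonant mode $u_M$ of \cref{eq:skin_coupled} at frequency $\omega_M$ corresponds precisely to a resonant mode $v_M = T^{-1}u_M = e^{\Gamma(x)}u_M$ of $\mc S_M$ at the same frequency. \cref{thm:deloc} then controls $v_M$: it grows at most like $C_3 M^r$ pointwise, with $r = 0$ for almost every $\omega^*$.

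First I would check that the normalisation is compatible. Since $\Gamma(0) = \int_0^0 \gamma = 0$, we have $e^{-\Gamma(0)} = 1$, and because $\gamma$ vanishes in a neighbourhood of $x_1^{\iL} = 0$ (the gauge potential is supported inside the resonators, and $0$ is the left endpoint of the first resonator), the derivative relation also matches: $u_M'(0)\big|_{\iL} = e^{-\Gamma(0)}v_M'(0)\big|_{\iL} = v_M'(0)\big|_{\iL}$ at the left exterior boundary. Thus the normalisation $\norm{(u_M(0), u_M'\rvert_{\iL}(0))^\top}_2 = 1$ transfers to $\norm{(v_M(0), v_M'\rvert_{\iL}(0))^\top}_2 = 1$, exactly the hypothesis of \cref{thm:deloc}.

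Next I would apply \cref{thm:deloc} to obtain the pointwise bound $\abs{v_M(x)} \leq \norm{(v_M(x), v_M'(x))^\top}_2 \leq C_3 M^r$ for all $x \in [0, ML]$. Multiplying through by $e^{-\Gamma(x)}$ and using $\abs{u_M(x)} = e^{-\Gamma(x)}\abs{v_M(x)}$ (valid since $\Gamma$ is real, as $\gamma_i$ is real) yields the claimed estimate
\[
    \abs{u_M(x)} = e^{-\Gamma(x)}\abs{v_M(x)} \leq C_3 M^r\, e^{-\Gamma(x)}.
\]
Finally, to conclude genuine exponential \emph{left-edge localisation} rather than a mere bound, I would observe that under the assumption $\gamma_i > 0$ for all $i$, the function $\Gamma(x) = \int_0^x \gamma(x')\,\dd x'$ is nondecreasing and grows linearly in the cell index: on average $\Gamma$ increases by $\sum_{i=1}^N \gamma_i \ell_i > 0$ per unit cell, so $e^{-\Gamma(x)}$ decays exponentially as $x$ moves rightward. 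Since the polynomial prefactor $C_3 M^r$ is overpowered by this exponential decay as one moves into the bulk, the mode $u_M$ concentrates at the left edge.

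The only subtlety — and the main thing to verify carefully rather than a genuine obstacle — is the boundary normalisation matching at $x = 0$, since $u_M$ and $v_M$ differ by the factor $e^{-\Gamma}$ whose derivative contributes a term $-\gamma\, e^{-\Gamma} v_M$ to $u_M'$. This term vanishes precisely because $\gamma \equiv 0$ on the background region containing the left exterior limit at $0$, as was already exploited in the proof of \cref{lem:symm_propmat_is_propmat_of_symm}. Everything else is an immediate substitution, so this corollary is essentially a restatement of \cref{thm:deloc} under the gauge change.
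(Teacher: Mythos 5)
Your proof is correct and follows essentially the same route as the paper, which obtains the corollary directly from \cref{thm:deloc} via the gauge relation $u_M(x) = e^{-\Gamma(x)}v_M(x)$ and the polynomial bound on $v_M$. Your additional verifications --- the normalisation matching at $x=0$ (using $\Gamma(0)=0$ and $\gamma\equiv 0$ outside the resonators) and the linear growth of $\Gamma$ under $\gamma_i>0$ --- are exactly the implicit details the paper leaves to the reader.
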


This corollary is illustrated in \cref{fig:decay_comparison} for the resonant modes of a  one-resonator unit cell ($N=1$), repeated $M=20$ times. Indeed, both the subwavelength and non-subwavelength resonant modes $u(x)$ decay exactly proportionally to the characteristic decay $e^{-\Gamma(x)}$, as predicted.

We note that using the approach introduced in \cite{ammari2025competingedgebulklocalisation}, we can generalise our proof here for the skin effect in the non-subwavelength regime to aperiodic systems.

\subsection{Damping transition}
Finally, in this subsection, our aim is to investigate another question that arises from the form \cref{eq:skin_coupled} within the resonators
\[
    \frac{{\dd}^{2} u}{\dd x^2}(x) + 2\gamma_i \frac{\dd u}{\dd x}(x) +\omega^2u(x)=0,
\]
where we assumed $v_i=1$.
The general solution is given by 
\begin{equation}\label{eq:fundamental_soln}
    u(x) = e^{-\gamma_i x}(ae^{\nu x}+be^{-\nu x}),
\end{equation}
where $\nu = \sqrt{\gamma_i^2-\omega^2}$ as in \cref{lem:propagation}. 
Thus, $\nu$ transitions from being real to purely imaginary at $\gamma_i=\omega$ which one would expect to correspond to a transition from decaying to oscillatory behaviour within the resonators. Indeed, $\gamma_i$ corresponds exactly to a spatial damping term, and this transition is the underdamping--overdamping transition of the damped harmonic oscillator.

We demonstrate this transition empirically in \cref{fig:nonreciprocal_damping} by plotting the symmetrised resonant modes $v(x)$ given by \eqref{symm} for two dimer systems with distinct decay rates $\gamma$. We observe that the resonant modes undergo a transition from oscillatory to non-oscillatory within the resonators, as $\gamma$ is increased while $\omega$ remains roughly constant.

Nevertheless, we would like to note that, due to the $e^{-\gamma_i x}$ prefactor, the overall decay behaviour of the non-symmetrised $u(x)$ is still dominated by the characteristic decay regardless of which side of the transition $\omega$ falls on.
\begin{figure}
    \centering
    \includegraphics[width=0.9\linewidth]{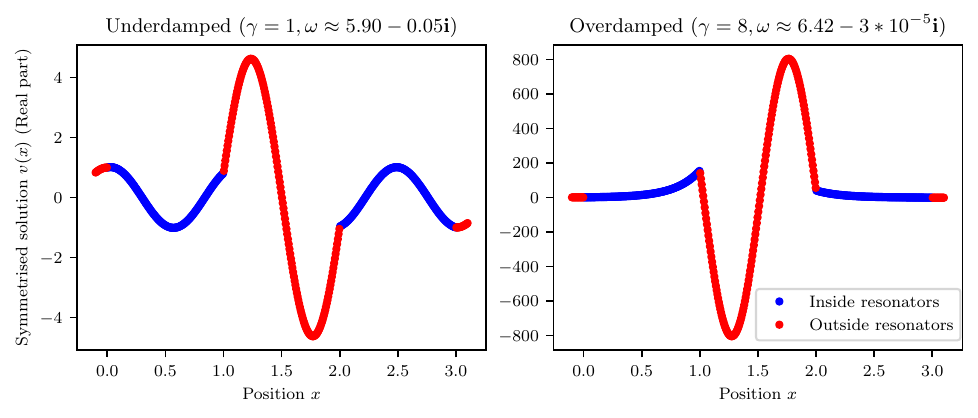}
    \caption{Symmetrised resonant modes $v(x)$ at a resonant frequency $\omega$ for two dimer systems with distinct values of $\gamma_1 =\gamma_2 = \gamma$ ($\ell_i = s_i=v_i = 1, \delta=10^{-1}$). While for low $\gamma=1 < \omega$ the mode is oscillatory within the resonators, this is no longer the case for high $\gamma=8>\omega$.} 
    \label{fig:nonreciprocal_damping}
\end{figure}

\section*{Acknowledgments}
    This work was supported by the Swiss National Science Foundation grant number 200021-236472 (AU and JQ), the Fundamental Research Funds for the Central Universities grant number 226-2025-00192 (PL and YS), and the National Key R\&D Program of China grant number 2024YFA1016000 (BL and PL). It was initiated while H.A. was visiting the Hong Kong Institute for Advanced Study as a Senior Fellow. 
\section*{Code availability}
The software used to produce the numerical results in this work is openly available at \\ \href{https://doi.org/10.5281/zenodo.18622435}{https://doi.org/10.5281/zenodo.18622435}.

\printbibliography
\end{document}